\newif\ifarxiv
\arxivtrue
\ifarxiv{}
  \documentclass[a4paper,UKenglish,cleveref,pagebackref,thm-restate,numberwithinsect,colorlinks,nameinlink,citecolor=green!50!black,linkcolor=red!50!black]{lipics-v2019}
\fi{}

\ifarxiv{}
  \nolinenumbers
  \hideLIPIcs
\fi{}

\usepackage[utf8]{inputenc}
\usepackage[T1]{fontenc}
\usepackage{xcolor}
\usepackage{amssymb,amsmath}
\ifarxiv{}
  \usepackage{amsthm}
\fi{}
\usepackage{dsfont} %
\usepackage{graphicx}
\usepackage{booktabs}
\usepackage{tabularx}
\usepackage{multicol}
\usepackage[ruled,vlined,linesnumbered]{algorithm2e}
\ifarxiv{}
  \bibliographystyle{plainnat}%
\fi{}

\ifarxiv{}
  \usepackage[numbers,sort&compress]{natbib} 
  \usepackage{csquotes}
\fi{}
\usepackage{tikz}

\usepackage{mathtools}

\usepackage{xargs} 
\newcommandx{\set}[2][1=1]{\ensuremath{\{#1,\ldots,#2\}}}
\newcommandx{\tlog}[3][1=,3=]{\log_{#1}^{#3}(#2)}

  \newtheorem{observation}{Observation}
  
  \theoremstyle{definition}
  \newtheorem{rrule}{Reduction Rule}

  \newtheorem{construction}{Construction}
\crefname{observation}{Observation}{Observations}
\crefname{rrule}{Reduction Rule}{Reduction Rules}
\crefname{construction}{Construction}{Constructions}
\Crefname{proposition}{Prop.}{Props.}
\crefname{proposition}{Proposition}{Propositions}
\Crefname{theorem}{Thm.}{Thm.}
\crefname{theorem}{Theorem}{Theorems}
\Crefname{corollary}{Cor.}{Cors.}
\crefname{corollary}{Corollary}{Corollaries}

\newcommand{\calF}{\mathcal{F}}

\newcommand{\I}{\mathcal{I}}
\newcommand{\yes}{\texttt{yes}}
\newcommand{\no}{\texttt{no}}
\newcommand{\RD}{$(\Rightarrow)\quad$}
\newcommand{\LD}{$(\Leftarrow)\quad$}

\newcommandx{\problemdef}[5][2=Input,4=Question]{%
  \begingroup
  \par\medskip
  \noindent \textsc{#1}\nopagebreak[4]
  \par\noindent\hangindent=\parindent\textbf{#2}:  #3\nopagebreak[4]
  \par\noindent\hangindent=\parindent\textbf{#4}:  #5
  \par  \medskip
  \endgroup
}

\newcommand{\N}{\mathbb{N}}
\newcommand{\Nzero}{\mathbb{N}_0}

\renewcommand{\O}{\mathcal{O}}
\renewcommand{\o}{o}
\newcommand{\Ost}{\O^*}

\newcommand{\prob}[1]{\textnormal{\textsc{#1}}}

\newcommand{\msat}{Multistage~2\nobreakdash-SAT}
\newcommand{\msatTsc}{\prob{\msat}}
\newcommand{\msatAcr}{\prob{M2SAT}}
\newcommand{\mqsat}{Multistage~$q$\nobreakdash-SAT}
\newcommand{\mqsatTsc}{\prob{\mqsat}}
\newcommand{\mqsatAcr}{\prob{M$q$SAT}}
\newcommand{\qsat}{$q$\nobreakdash-Satisfiability}
\newcommand{\qsatTsc}{\prob{\qsat}}
\newcommand{\qsatAcr}{\prob{$q$\nobreakdash-SAT}}
\newcommand{\twosat}{$2$\nobreakdash-Satisfiability}
\newcommand{\twosatTsc}{\prob{\twosat}}
\newcommand{\twosatAcr}{\prob{$2$\nobreakdash-SAT}}

\newcommand{\cocl}[1]{\ensuremath{\operatorname{#1}}}
\newcommand{\W}[1]{\cocl{W[#1]}}
\newcommand{\classP}{\cocl{P}}
\newcommand{\NP}{\cocl{NP}}

\newcommand{\coNP}{\cocl{coNP}}
\newcommand{\XP}{\cocl{XP}}
\newcommand{\poly}{\cocl{poly}}

\newcommand{\NPincoNPslashpoly}{\ensuremath{\NP\subseteq \coNP/\poly}}

\newcommand{\pt}{parametric transformation}
\newcommand{\lpt}{linear parametric transformation}

\newcommand{\ETHbreaks}{the ETH breaks}
\newcommand{\ETHbreaksL}{the Exponential Time Hypothesis (ETH) breaks}

\newcommand{\frml}{\mathcal{C}}

\newcommand{\cqed}{\hfill$\diamond$}

\newcommand{\tref}[1]{(\Cref{#1})}
\newcommand{\trefs}[2]{(\Cref{#1,#2})}
\newcommand{\xneg}[1]{\lnot{#1}}
\newcommand{\cls}[2]{\ensuremath{(#1\lor#2)}}
\newcommand{\true}{\ensuremath{\top}}
\newcommand{\false}{\ensuremath{\bot}}
\newcommand{\tfset}{\ensuremath{\{\false,\true\}}}
\renewcommand{\frml}{\ensuremath{\phi}}
\newcommand{\seqf}{\ensuremath{\Phi}}
\newcommand{\assment}{truth assignment}
\newcommand{\restr}[2]{\ensuremath{#1|_{#2}}}

\newcommand{\eps}{\ensuremath{\varepsilon}}

\newcommand{\xcase}[2]{\textit{Case~#1}: #2.}

\newcommand{\xor}{\oplus}

\DeclareMathOperator*{\bigland}{\bigwedge}

\newcommand{\ceq}{\ensuremath{\coloneqq}}
\newcommand{\cif}{\text{if }}
\newcommand{\otw}{\text{otherwise}}

\definecolor{lilla}{HTML}{750787}

\usepackage{etoolbox}

\newcommand{\mytitle}{A Multistage View on 2-Satisfiability}

\ifarxiv{}
  \title{\mytitle} %
  \titlerunning{Multistage 2-SAT} %
  
  \author{Till Fluschnik}
  {Technische Universit\"at Berlin, Faculty~IV, Algorithmics and Computational Complexity, Germany}
  {till.fluschnik@tu-berlin.de}
  {https://orcid.org/0000-0003-2203-4386}
  {Supported by DFG, project TORE (NI 369/18).}%

  \authorrunning{T.~Fluschnik} %

  \Copyright{T.~Fluschnik} %

  \ccsdesc[100]{Theory of computation, Discrete mathematics} %

  \keywords{
  satisfiability, 
  temporal problems,
  symmetric difference, 
  parameterized complexity,
  problem kernelization
  } %

  \category{} %

  \relatedversion{} %

  \supplement{}%

  \acknowledgements{I thank Hendrik Molter and Rolf Niedermeier for their constructive feedbacks.}%

  \EventEditors{John Q. Open and Joan R. Access}
  \EventNoEds{2}
  \EventLongTitle{42nd Conference on Very Important Topics (CVIT 2016)}
  \EventShortTitle{CVIT 2016}
  \EventAcronym{CVIT}
  \EventYear{2016}
  \EventDate{December 24--27, 2016}
  \EventLocation{Little Whinging, United Kingdom}
  \EventLogo{}
  \SeriesVolume{42}
  \ArticleNo3
\fi{}

\begin{document}
\ifarxiv{}\maketitle
\fi{}
\begin{abstract}
We study~\prob{$q$-SAT} in the multistage model, 
focusing on the linear-time solvable~\prob{$2$-SAT}.
Herein,
given a sequence of~$q$-CNF fomulas and a non-negative integer~$d$,
the question is whether there is a sequence of satisfying \assment{}s
such that for every two consecutive \assment{}s,
the number of variables whose values changed is at most~$d$.
We prove that \msatTsc{} is \NP-hard even in quite restricted cases.
Moreover,
we present parameterized algorithms (including kernelization) for~\msatTsc{}
and prove them to be asymptotically optimal.
\end{abstract}

\section{Introduction}

\qsatTsc{} (\qsatAcr{}) is one of the most basic and best studied decision problems in computer science:
It asks whether a given boolean formula
in conjunctive normal form, 
where each clause consists of at most~$q$ literals,
is satisfiable.
\qsatAcr{} is~\NP-complete for~$q\geq 3$,
while \twosatTsc{} (\twosatAcr{})
is linear-time solvable~\cite{AspvallPT79}.
The recently introduced multistage model~\cite{EMS14,GuptaTW14}
takes a sequence of instances of some decision problem
(e.g., 
modeling one instance that evolved over time),
and asks
whether there is a sequence of solutions to them
such that,
roughly speaking,
any two consecutive solutions do not differ too much.
We introduce 
\qsatAcr{}
in the multistage model,
defined as follows.\footnote{We identify \texttt{false} and \texttt{true} with~$\false$ and~$\true$, 
respectively.}

\problemdef{\mqsatTsc{} (\mqsatAcr{})}
{A set~$X$ of variables, 
a sequence~$\seqf=(\frml_1,\dots,\frml_\tau)$,
$\tau\in\N$,
of $q$-CNF formulas over literals over~$X$, 
and an integer~$d\in\Nzero$.}
{Are there~$\tau$ \assment{}s~$f_1,\dots,f_\tau\colon X\to \{\false,\true\}$ such that 
\begin{enumerate}[(i)]
 \item for each~$i\in\set{\tau}$, $f_i$ is a satisfying \assment{} for~$\frml_i$,
and
 \item for each~$i\in\set{\tau-1}$, it holds 
that~$|\{x\in X\mid f_i(x)\neq f_{i+1}(x)\}|\leq d$?
\end{enumerate}
}

\noindent
Constraint~(ii) 
of~\mqsatAcr{}
can also be understood as that the Hamming~distance of two consecutive \assment{}s interpreted as~$n$-dimensional vectors over~$\{\false,\true\}$ is at most~$d$,
or when considering the sets of variables set true,
then the symmetric difference of two consecutive sets is at most~$d$.

In this work,
we focus on~\msatAcr{} yet relate most of our results to~\mqsatAcr{}.
We study \msatAcr{}
in terms of classic computational complexity and parameterized algorithmics~\cite{cygan2015parameterized}.

\ifarxiv{}
  \subparagraph{Motivation.}
\fi{}
In theory as well as in practice,
it is common to model problems
as~\qsatAcr{}- or even~\twosatAcr{}-instances.
Once being modeled, 
established solvers specialized on~\qsatAcr{} are employed.
In some cases,
a sequence of problem instances 
(e.g., modeling a problem instance that changes over time)
is to solve 
such that any two consecutive solutions are similar in some way
(e.g., when costs are inferred for setup changes).
Hence,
when following the previously described approach,
each problem instance is first modeled as a \qsatAcr{} instance such that
a sequence of \qsatAcr{}-instances remains to be solved.
Comparably to the single-stage setting,
understanding the multistage setting could give raise to a general approach for solving different (multistage) problems.
With \mqsatAcr{} we introduce the first problem that models the described setup.
Note that,
though a lot of variants of \qsatAcr{} exist,
\mqsatAcr{} is one of the very few variants that deal with a sequence of \qsatAcr{}-instances~\cite{Ramnath04}.

\ifarxiv{}
  \subparagraph{Our Contributions.}
\fi{}
Our results for~\msatTsc{} are summarized in~\cref{fig:results}.
\begin{figure}[t!]
 \centering
  \begin{tikzpicture}
 
    \usetikzlibrary{patterns,backgrounds,shapes}
    \usetikzlibrary{calc}

    \ifarxiv{}
      \def\yr{1}
      \def\xr{1}
    \fi{}

    \ifarxiv{}
      \def\boxw{2.9}
    \fi{}
    \def\boxh{0.6}

    \def\colXP{yellow!50!orange!20}
    \def\colpNPh{orange!50!red!18!white}
    \def\colFPTnoPK{green!10!white}
    \def\colFPTPK{green!20!white}
    \def\colOpen{blue!20!white}
    \ifarxiv{}
      \def\fsres{\footnotesize}
      \def\fsresx{\scriptsize}
    \fi{}
    \def\pnph{para-\NP-h.}
    
    \tikzstyle{xarc}=[->,gray,thick,>=latex,rounded corners]
    
    \newcommand{\parabox}[7]{
      \node (a#1) at (#2)[rectangle, rounded corners, minimum width=\xr*\boxw cm, minimum height=\yr*\boxh cm,fill=white,very thick]{{\small\boldmath#3}};
      \node (c#1) at (a#1.south)[anchor=north,rectangle, rounded corners, minimum width=\xr*\boxw*1 cm, minimum height=1.4*\yr*\boxh cm,fill=#6,align=center,font=\fsres]{#4\\#5};
      \node (b#1) at (c#1.south)[anchor=north,rectangle, rounded corners, minimum width=\xr*\boxw*1 cm, minimum height=1.45*\yr*\boxh cm,fill=lightgray!15!white,align=center,font=\fsresx]{};
      \draw[rounded corners,fill=lightgray] (b#1.south west) rectangle (a#1.north east);
      \node (a#1) at (#2)[rectangle, rounded corners, minimum width=\xr*\boxw cm, minimum height=\yr*\boxh cm,fill=white,very thick,draw]{{\small\boldmath#3}};
      \node (c#1) at (a#1.south)[anchor=north,rectangle, rounded corners, minimum width=\xr*\boxw*1 cm, minimum height=1.4*\yr*\boxh cm,fill=#6,align=center,font=\fsres,draw]{#4\\#5};
      \node (b#1) at (c#1.south)[anchor=north,rectangle, rounded corners, minimum width=\xr*\boxw*1 cm, minimum height=1.45*\yr*\boxh cm,fill=lightgray!15!white,align=center,font=\fsresx]{#7};
    }

    \def\xsh{3.35}
    \def\ysh{3}
    
    \def\teps{0.5}

    \parabox{m}{0.0*\xr,0*\yr}{$m$}{\pnph{}}{\tref{thm:nphardnessd}}{\colpNPh}{$ m= 6$};
    \parabox{n}{\xsh*\xr,0.0*\ysh*\yr}{$n$}{FPT, no~PK}{\tref{thm:fptn}}{\colFPTnoPK}{$2^{\O(n)}$ $\mid$\\ no~$2^{o(n)}$ $\dagger$};
    \parabox{d}{0.5*\xsh*\xr,-1*\ysh*\yr}{$d$}{\pnph{}}{\tref{thm:nphardnessd}}{\colpNPh}{$d=1$ $\mid$\\ lin.\ time if $d=0$};
    \parabox{nd}{1.5*\xsh*\xr,-1*\ysh*\yr}{$n-d$}{\XP, \W{1}-h.}{\tref{thm:cnc}}{\colXP}{$n^{\O(n-d)}$ $\mid$\\ no~$n^{o(n-d)}$};
    \parabox{tau}{2*\xsh*\xr-0.0*\xr,0*\yr}{$\tau$}{\pnph{}}{\tref{thm:nphardnesstau}}{\colpNPh}{$\tau=2$ $\mid$\\ lin.\ time if $\tau=1$};

    \draw[xarc] (ad) to (bn);
    \draw[xarc] (and) to (bn);
    
    \parabox{md}{0.5*\xsh*\xr,1*\ysh*\yr}{$m+d$}{\pnph{}}{\tref{thm:nphardnessd}}{\colpNPh}{$m=6$ and~$d=1$};
    \parabox{mnd}{-0.5*\xsh*\xr,1*\ysh*\yr}{$m+n-d$}{FPT, no PK}
    {\tref{thm:fptmnd}}{\colFPTnoPK}{no~$(n-d)^{f(m)}$ PK};
    
    \parabox{taud}{2.5*\xsh*\xr,1*\ysh*\yr}{$\tau+d$}{XP, \W{1}-h.}{\trefs{thm:xptaud}{thm:nphardnesstau}}{\colXP}{$n^{\O(\tau\cdot d)}$ $\mid$\\ no~$n^{o(d)\cdot f(\tau)}$ $\ddagger$};
    \parabox{taund}{1.5*\xsh*\xr,1*\ysh*\yr}{$\tau+n-d$}{XP, \W{1}-h.}{\tref{thm:allbutkNPh}}{\colXP}{no~$n^{o(n-d)\cdot f(\tau)}$};

    \def\xang{13}
    \draw[xarc] (am) to (bmd);
    \draw[xarc] (am) to (bmnd);
    \draw[xarc] (atau) to (btaud);
    \draw[xarc] (atau) to (btaund);
    \draw[xarc] (ad) to (bmd);
    \coordinate (x) at ($(bd.south)+(0,-0.5*\teps)$);
    \draw[xarc] (bd) to (x) to (btaud|-x) to (btaud);
    \coordinate (x) at ($(bnd.south)+(0,-0.8*\teps)$);
    \draw[xarc] (bnd) to (x) to (bmnd|-x) to (bmnd);
    \draw[xarc] (and) to (btaund);

    \parabox{mn}{0*\xr,2*\ysh*\yr}{$m+n$}{FPT, no~PK}{\tref{thm:nopkmn}}{\colFPTnoPK}{no~$n^{f(m,d)}$~PK};
    \parabox{mtau}{\xsh*\xr,2*\ysh*\yr}{$m+\tau$}{FPT, PK}{\tref{thm:qaudkermntau}}{\colFPTPK}{$\O(m \tau)$~PK $\mid$\\ no~$(m+\tau)^{2-\eps}$~PK $\mathparagraph$};
    \parabox{ntau}{2*\xsh*\xr,2*\ysh*\yr}{$n+\tau$}{FPT, PK}{\tref{thm:qaudkermntau}}{\colFPTPK}{$\O(n^2\tau)$~PK $\mid$\\ no~$\O(n^{2-\eps}\tau)$~PK $\mathsection$};

    \draw[xarc] (amd) to (bmn);
    \draw[xarc] (amnd) to (bmn);
    \draw[xarc] (am) to ($(bmn.south)+(0,-\teps)$) to (bmtau);
    \draw[xarc] (atau) to ($(bntau.south)+(0,-\teps)$) to (bmtau);
    \draw[xarc] (an) to ($(bmtau.south)+(0,-\teps)$) to (bmn);
    \draw[xarc] (an) to ($(bmtau.south)+(0,-\teps)$) to (bntau);
    \draw[xarc] (atau) to (btaud);
    \draw[xarc] (ataud) to (bntau);
    \draw[xarc] (ataund) to (bntau);
  \end{tikzpicture}
  \caption{Our results for \protect\msatTsc{}.
  Each box gives, 
  regarding to a parameterization (top layer),
  our parameterized classification (middle layer) with additional details on the corresponding result (bottom layer).
  Arrows indicate the parameter hierarchy: 
  An arrow from parameter~$p_1$ to~$p_2$ indicates that $p_1\leq p_2$.
  ``PK'' and~``no PK'' stand for 
  ``polynomial problem kernel''
  and ``no polynomial problem kernel unless~\NPincoNPslashpoly{}'',
  respectively.
  \quad
  $\dagger$:~unless \ETHbreaks~\tref{thm:nphardnessd}.\quad 
  $\ddagger$:~unless \ETHbreaks~\tref{thm:nphardnesstau}.\quad 
  $\mathparagraph$:~unless~\NPincoNPslashpoly~\tref{thm:nolinkermntau} \quad
  $\mathsection$:~unless~\NPincoNPslashpoly~\tref{thm:nphardnesstau}.%
  }
  \label{fig:results}
\end{figure}
We prove \msatTsc{} to be \NP-hard,
even in fairly restricted cases:
(i)~if~$d=1$ and the maximum number~$m$ of clauses in any stage is six, or
(ii)~if there are only two stages.
These results are tight in the sense that \msatAcr{} is linear-time solvable when~$d=0$ or~$\tau=1$.
While \NP-hardness for~$d=1$ implies that there is no~$(n+m+\tau)^{f(d)}$-time algorithm for any function~$f$ unless~$\classP=\NP$,
where $n$ denotes the number of variables,
we prove that when parameterized by the dual parameter~$n-d$ 
(the minimum number of variables not changing between any two consecutive layers),
\msatAcr{} is~\W{1}-hard and solvable in $\Ost(n^{\O(n-d)})$ time.\footnote{The~$\O^*$-notation suppresses factors polynomial in the input size.}
We prove this algorithm to be tight in the sense that,
unless~\ETHbreaksL,
there is no~$\Ost(n^{o(n-d)})$-time algorithm.
Further,
we prove that \msatAcr{} is solvable in~$\O^*(2^{\O(n)})$~time but not in~$\O^*(2^{\o(n)})$~time unless~\ETHbreaks.
Likewise,
we prove that \msatAcr{} is solvable in~$\O^*(n^{\O(\tau\cdot d)})$~time but not in~$\O^*(n^{o(d)\cdot f(\tau)})$~time for any function~$f$ unless~\ETHbreaks.
As to efficient and effective data reduction,
we prove \msatAcr{} to admit problem kernelizations of size~$\O(m\cdot \tau)$ and~$\O(n^2\tau)$,
but none of size~$(n+m)^{\O(1)}$,
$\O((n+m+\tau)^{2-\eps})$,
or~$\O(n^{2-\eps}\tau)$, 
$\eps>0$, 
unless~\NPincoNPslashpoly.

\ifarxiv{}
  \subparagraph{Related Work.}
\fi{}

\qsatAcr{} is one of the most famous 
decision problems
with a central role 
in \NP-completeness theory~\cite{Cook71,Karp72},
for the (Strong) Exponential Time Hypothesis~\cite{ImpagliazzoP01,ImpagliazzoPZ01},
and in the early theory on kernelization lower bounds~\cite{FortnowS11,BodlaenderDFH09},
for instance.
In contrast to~\qsatAcr{} with~$q\geq 3$,
\twosatAcr{}
is proven to be polynomial-~\cite{Krom67}, 
even linear-time~\cite{AspvallPT79} solvable.
Several applications of \twosatAcr{} are known
(see, e.g., \cite{RaghavanCS86,EvenIS76,ChrobakD99,HansenJ87}).
In the multistage model,
various problems from different fields were studied,
e.g.\ graph~theory~\cite{FluschnikNRZ19,FluschnikNSZ20,ChimaniTW20,GuptaTW14,BampisELP18,BampisEK19},
facility location~\cite{EMS14},
knapsack~\cite{BampisET19},
or committee elections~\cite{BredereckFK20}.
Also variations to the multistage model were studied,
e.g.\ with a global budget~\cite{HeegerHKNRS20},
an online-version~\cite{BampisEST19},
or using different distance measures for consecutive stages~\cite{BredereckFK20,FluschnikNSZ20}.

\section{Preliminaries}
\label{sec:prelims}

We denote by~$\N$ and~$\Nzero$ the natural numbers excluding and including zero,
respectively.
Frequently,
we will tacitly make use of 
the fact that
for every~$n\in\N$, $0\leq k\leq n$, 
it holds true that $1+\sum_{i=1}^k \binom{n}{i}= \sum_{i=0}^k \binom{n}{i}\leq 1+n^k\leq 2n^k$.

\ifarxiv{}
  \subparagraph*{Satisfiability.}
\fi{}
Let~$X$ denote a set of variables.
A literal is a variable that is either positive or negated 
(we denote the negation of~$x$ by~$\xneg{x}$).
A clause is a disjunction over literals.
A formula~$\frml$ is in conjunctive normal form (CNF) if it is of the form~$\bigland_i C_i$,
where~$C_i$ is a clause.
A formula~$\frml$ is in~$q$-CNF if it is in CNF and each clause consists of at most~$q$ literals.
An \assment{}~$f\colon X\to\{\false,\true\}$ is satisfying for~$\frml$ (or satisfies $\frml$) 
if each clause is satisfied,
which is the case if at least one literal in the clause is evaluated to true 
(a positive variable assigned true, 
or a negated variable assigned false).
For~$a,b\in\tfset$,
let~$a\xor b\ceq\false$ if~$a=b$,
and $a\xor b\ceq\true$ otherwise.
For~$X'\subset X$,
an \assment{}~$f'\colon X'\to \tfset$ is called \emph{partial}.
We say that
we \emph{simplify} a formula~$\frml$ given a partial \assment{}~$f'$ 
(we denote the simplified formula by~$\frml[f']$)
if each variable $x\in X'$ is replaced by~$f'(x)$,
and then each clause containing an evaluated-to-true literal is deleted.

\ifarxiv{}
  \subparagraph*{Parameterized Algorithmics.}
\fi{}

A parameterized problem~$L$ is a set of instances~$(x,p)\in\Sigma^*\times\Nzero$,
where~$\Sigma$ is a finite alphabet and~$p$ is referred to as the parameter.
A parameterized problem~$L$ is
(i) fixed-parameter tractable (in~FPT) if each instance~$(x,p)$ can be decided for~$L$ in~$f(p)\cdot |x|^{\O(1)}$ time,
and
(ii) in XP if each instance~$(x,p)$ can be decided for~$L$ in~$|x|^{g(p)}$ time,
where~$f,g$ are computable functions only depending on~$p$.
If~$L$ is~\W{1}-hard,
it is presumably not in~FPT.
A problem bikernelization for a parameterized problem~$L$ to a parameterized problem~$L'$ takes any instance~$(x,p)$ of~$L$
and maps it in polynomial time to an equivalent instance~$(x',p')$ of~$L'$ 
(the so-called problem bikernel) 
such that~$|x'|+p'\leq f(p)$ for some computable function~$f$.
A problem kernelization is a problem bikernelization where~$L=L'$.
If~$f$ is a polynomial, 
the problem (bi)kernelization is said to be polynomial.
A \pt{} from a parameterized problem~$L$ to a parameterized problem~$L'$ maps any instance~$(x,p)$ of~$L$ in $f(p)\cdot |x|^{\O(1)}$~time to an equivalent instance~$(x',p')$ of~$L'$ such that~$p'\leq g(p)$ for some functions $f,g$ each only depending on~$p$.
If there is a \pt{} from~$L$ to~$L'$ with~$L$ being~\W{1}-hard,
then~$L'$ is \W{1}-hard.
If~$f(p)\in p^{\O(1)}$ and~$g(p)\in \O(p)$,
then we have a \lpt{}~\cite{HermelinW12}.
If there is a \lpt{} from a problem~$L$ to problem~$L'$ with~$L'$ admitting a problem kernelization of size~$h(p')$,
then~$L$ admits a problem bikernelization of size~$h(p')$.
\ifarxiv{}
  \subparagraph*{Preprocessing on \msatTsc.}
\fi{}

Due to the following data reduction,
we can safely assume each stage to admit a satisfying \assment{}.

\begin{rrule}
 \label{rrule:trivialno}
 If a stage exists with no satisfying \assment{},
 then return \no.
\end{rrule}

\ifarxiv{}
\noindent
 \cref{rrule:trivialno} is correct and applicable in linear time.
\fi{}

\section{From Easy to Hard: NP- and W-hardness}
\label{sec:hardness}

\msatTsc{} is linear-time solvable if the input consists of only one stage,
or if all or none variables are allowed to change its \assment{} between two consecutive stages.

\begin{observation}%
 \label{obs:msatlintime}
 \msatTsc{} is linear-time solvable if (i)~$\tau=1$, (ii)~$d=0$, 
 or (iii)~$d=n$.
\end{observation}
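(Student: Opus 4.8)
The plan is to handle each of the three cases separately, observing that in every case \msatTsc{} collapses to a constant number of independent single-stage \twosatAcr{} problems (or closely related checks), each of which is linear-time solvable by the classical result of Aspvall, Plass, and Tarjan~\cite{AspvallPT79}.

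For case~(i), when~$\tau=1$, the entire input is a single~$2$-CNF formula~$\frml_1$ with no distance constraint to satisfy, since constraint~(ii) of \msatAcr{} ranges over~$i\in\set{\tau-1}=\emptyset$ and is vacuous. Thus the instance is a \yes-instance if and only if~$\frml_1$ is satisfiable, which is decidable in linear time. For case~(ii), when~$d=0$, constraint~(ii) forces~$f_i(x)=f_{i+1}(x)$ for every variable~$x$ and every~$i$, so all \assment{}s must be identical: $f_1=\dots=f_\tau=:f$. Hence a solution exists precisely when there is a single \assment{}~$f$ satisfying every~$\frml_i$ simultaneously, i.e.\ satisfying the conjunction~$\bigland_{i=1}^\tau \frml_i$. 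This conjunction is again a~$2$-CNF formula whose total size is linear in the input, so its satisfiability is decidable in linear time.

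Case~(iii), when~$d=n$, is the most interesting: here constraint~(ii) is again vacuous, because the Hamming distance between any two \assment{}s over~$n$ variables is at most~$n=d$. Consequently the stages decouple entirely, and the instance is a \yes-instance if and only if \emph{each}~$\frml_i$ is individually satisfiable. By~\cref{rrule:trivialno} (or directly), one can test each~$\frml_i$ for satisfiability in time linear in~$|\frml_i|$ using~\cite{AspvallPT79}, and the sum of these sizes is linear in the input, giving an overall linear-time procedure.

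I do not expect any genuine obstacle here; the lemma is essentially an unpacking of the definition in each degenerate regime, and the only point requiring a moment's care is the bookkeeping that in cases~(i) and~(iii) the distance constraint is vacuous (for opposite reasons), whereas in case~(ii) it is maximally restrictive and merges the stages into one formula. The key technical ingredient throughout is simply the linear-time solvability of single-stage \twosatAcr{}.
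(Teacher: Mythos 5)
Your proof is correct and follows essentially the same route as the paper: case~(i) and case~(ii) reduce to a single \twosatAcr{} instance (the latter via the conjunction~$\bigland_{i=1}^\tau \frml_i$), and case~(iii) decouples into~$\tau$ independent \twosatAcr{} instances, all solved in linear time via~\cite{AspvallPT79}. Nothing further is needed.
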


\begin{proof}
 Let~$(X,\seqf=(\frml_1,\dots,\frml_\tau),d)$ be an instance of~\msatAcr{}.
 \xcase{(i)}{$\tau=1$}
 Polynomial-time many-one reduction to \twosatAcr{} with instance~$(X,\frml_1)$.
 \xcase{(ii)}{$d=0$} 
 Polynomial-time many-one reduction to \twosatAcr{} 
 with instance~$(X,\frml')$,
 wheret~$\frml'=\bigland_{i=1}^\tau \frml_i$,
 \xcase{(iii)}{$d=n$}
 Solve each of the $\tau$ instances~$(X,\frml_1),\dots,(X,\frml_\tau)$ of~\twosatAcr{} individually (Turing~reduction).
\end{proof}

\noindent
We will prove that the cases~(i) and~(ii) in~\cref{obs:msatlintime} are tight:
\msatTsc{} becomes \NP-hard if~$\tau\geq 2$ 
(\cref{ssec:twostages,ssec:allbutk}) 
or~$d=1$ 
(\cref{ssec:onechange}).
For the case~(iii) in~\cref{obs:msatlintime} the picture looks different:
we prove \msatTsc{} to be polynomial-time solvable if~$n-d\in\O(1)$~(\cref{sec:cnc}).

\subsection{From One to Two Stages}
\label{ssec:twostages}

In this section,
we prove that 
\msatTsc{} becomes \NP-hard if~$\tau\geq 2$.
In fact,
we prove the following.

\begin{theorem}%
 \label{thm:nphardnesstau}
\msatTsc{} is \NP-hard, 
even for two stages,
where the variables appear all negated in one
and all positive in the other stage. 
Moreover, \msatTsc{} 
\begin{enumerate}[(i)]
 \item is \W{1}-hard when parameterized by~$d$ even if~$\tau=2$,\label{hard:wone}
 \item admits no~$n^{o(d)\cdot f(\tau)}$-time algorithm for any function~$f$ unless~\ETHbreaks, and\label{hard:eth}
 \item admits no problem kernelization of size~$\O(n^{2-\eps}\cdot f(\tau))$ for any~$\eps>0$ and function~$f$,
 unless~\NPincoNPslashpoly{}.\label{hard:noqk}
\end{enumerate}
\end{theorem}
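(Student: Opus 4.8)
The plan is to reduce from a W[1]-hard, ETH-tight problem with a quadratic kernelization lower bound so that all four claims fall out of a single construction. The natural candidate is \multicoloredIndependentSet{} (equivalently \multicoloredClique{}), which is W[1]-hard when parameterized by the number~$k$ of color classes, has no~$f(k)\cdot n^{o(k)}$-time algorithm under ETH, and whose composition-based framework yields kernelization lower bounds. First I would take an instance with vertex set partitioned into $k$ color classes $V_1,\dots,V_k$ and encode the choice of one vertex per class as a truth assignment. The two-stage structure with opposite polarities is the key gadget: I would introduce one variable~$x_v$ per vertex and design~$\frml_1$ (all literals negated) and~$\frml_2$ (all literals positive) so that a satisfying assignment to~$\frml_1$ must set to false all but at most one variable per color class, while~$\frml_2$ forces the complementary behavior. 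Setting~$d$ roughly equal to~$k$ (or $2k$) then constrains how many variables may flip between the two stages, and I would arrange the clauses so that the flips exactly encode selecting $k$ pairwise non-adjacent (independent) vertices.

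Concretely, I would spend variables to mark ``the chosen vertex of color~$i$'' and use the at-most-$q$ literals per clause together with the polarity trick to enforce: (a)~in stage~1, all variables are false (easy, since everything is negated, the all-false assignment satisfies every clause); (b)~in stage~2, where everything is positive, each color-class block of clauses forces at least one variable per class to be true; and (c)~the adjacency of the chosen vertices is ruled out by short clauses over the variable pairs. The Hamming-distance budget~$d$ between stages then counts precisely the number of variables switched from false to true, i.e.\ one per color class, so~$d=k$ captures exactly a multicolored independent set of size~$k$. The forward and backward directions of the equivalence proof are routine once the gadget is fixed: a solution assignment sequence yields a selection of $k$ independent vertices and vice versa.

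From this single reduction I would read off all parts. Part~\eqref{hard:wone} (W[1]-hardness for parameter~$d$ with $\tau=2$) follows because the construction is a parameterized reduction with $d=\O(k)$ and $\tau=2$ fixed. Part~\eqref{hard:eth} follows because the reduction is polynomial and preserves the parameter linearly ($d=\Theta(k)$), so a hypothetical~$n^{o(d)\cdot f(\tau)}$ algorithm with $\tau$ constant would give an~$n^{o(k)}$ algorithm for \multicoloredIndependentSet{}, contradicting ETH via the standard lower bound. Part~\eqref{hard:noqk} requires the reduction additionally to keep the \emph{number of variables}~$n$ near-linear in the original number of vertices, so that a kernel of size~$\O(n^{2-\eps}f(\tau))$ would compress the source problem below the $\Omega(n^{2-\eps})$ barrier established for dense incidence structures via cross-composition or the framework of Dell and van Melkebeek; I would invoke such a quadratic OR-composition / weak-compression lower bound for an appropriate SAT-flavored source.

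The main obstacle will be the simultaneous tightness of \emph{all three} lower bounds from one gadget: the W[1]- and ETH-hardness want a \emph{linear} dependence $d=\Theta(k)$ and constant~$\tau$, whereas the kernelization lower bound of size~$\O(n^{2-\eps})$ demands tight control over how the number of variables~$n$ scales with the source instance size, since the exponent~$2-\eps$ is sharp only if the encoding neither blows up nor over-compresses~$n$. Reconciling these—keeping $n$ essentially linear in the number of vertices while still forcing the right clause structure within the $2$-literal-per-clause limit (note \msatAcr{} restricts to $q=2$, so adjacency and selection constraints must be expressed by conjunctions of binary clauses rather than long ones)—is the delicate part, and I expect the bulk of the work to lie in verifying that the binary-clause encoding of ``exactly one chosen vertex per class'' plus ``no chosen edge'' respects both the polarity discipline of the two stages and the variable-count bound.
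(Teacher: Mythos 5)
There is a genuine gap at the heart of your gadget. With the first stage forcing the all-false assignment and $d=k$, the Hamming budget only \emph{upper}-bounds the number of variables that may become true in stage~2; nothing forces at least one true variable per color class. You would need stage~2 to contain a constraint ``at least one of $x_{v}$, $v\in V_i$, is true,'' but this is not expressible in 2-CNF for classes of size~$\geq 3$ (the relation ``at least one of three variables is true'' is not closed under coordinate-wise majority, so no conjunction of binary clauses defines it), and the at-most-two-literals restriction is exactly the regime of \msatAcr{}. Your ``no chosen edge'' clauses $(\xneg{x_u}\lor\xneg{x_v})$ also break the all-positive polarity you assign to stage~2, and your part~\eqref{hard:noqk} rests on an unspecified ``SAT-flavored source'' for the $\O(n^{2-\eps})$ compression bound --- \multicoloredIndependentSet{} parameterized by the number of vertices is not known to carry such a bound, so this part does not follow from your reduction as described. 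The tension you anticipate between the three lower bounds is real for your choice of source problem, and it is not resolved by ``verifying the binary-clause encoding''; the encoding you need does not exist.

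The paper dissolves all of these difficulties by choosing a different source: \prob{Weighted 2-SAT} (is there a satisfying assignment with at most $k$ variables set true?), parameterized by $k$. This problem is simultaneously \W{1}-complete in $k$, has no $n^{o(k)}$-time algorithm under the ETH, and admits no bikernel of size $\O(n^{2-\eps})$ by Dell and van Melkebeek --- so a single trivial construction ($\frml_1\ceq\bigland_{x\in X}(\xneg{x})$ forcing all-false, $\frml_2\ceq\frml$, $d\ceq k$) transfers all three lower bounds at once, because ``at most $k$ true variables'' is precisely what the Hamming budget from an all-false stage enforces. The all-negated/all-positive form of the NP-hardness claim is then obtained from the special case of \prob{Weighted 2-SAT} with only positive literals (essentially \VC{}), which suffices for NP-hardness even though it is FPT in $k$. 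Note that your MIS-based idea does appear in the paper, but in mirrored form for \cref{thm:allbutkNPh}: starting from the all-\emph{true} assignment with $d=n-k$, the budget \emph{lower}-bounds the number of surviving true variables, and the purely negative clauses $(\xneg{x}\lor\xneg{y})$ enforce ``at most one per class,'' so the pigeonhole principle yields exactly one per class --- that is why MIS is the right source for the parameter $n-d$, but not for $d$.
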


\noindent
We will reduce from the following \NP-hard problem:

\problemdef{Weighted 2-SAT}
{A set of variables~$X$, a 2-CNF~$\frml$ over~$X$, and an integer~$k$.}
{Is there satisfying \assment{} for~$\frml$ with at most~$k$ variables set true?}

\noindent
When parameterized by the number~$k$ of set-to-true variables,
\prob{Weighted 2-SAT} is \W{1}-complete~\cite{DowneyF99,FlumG06}.
Moreover,
\prob{Weighted 2-SAT} admits
no~$n^{o(k)}$-time algorithm unless~\ETHbreaks~\cite{ChenHKX06}
and no problem bikernelization of size~$\O(n^{2-\eps})$, $\eps>0$, unless~\NPincoNPslashpoly~\cite{DellM14}.

\begin{construction}
 \label{constr:wsat}
 Let~$(X,\frml,k)$ be an instance of \prob{Weighted 2-SAT},
 where~$\frml=\bigland_{i=1}^m C_i$.
 Construct~$\seqf=(\frml_1,\frml_2)$,
 where~$\frml_2\ceq \frml$ 
 and~$\frml_1\ceq \bigland_{x\in X}(\xneg{x})$ consists of~$n$ size-one clauses, 
 where each variable appears negated in one clause.
 Finally, set~$d\ceq k$.
 \cqed
\end{construction}

\begin{lemma}%
 \label{lem:wsat:corr}
 Let~$\I=(X,\frml,k)$ be an instance of \prob{Weighted 2-SAT},
 and let~$\I'=(X,\seqf,d)$ be an instance of \msatTsc{} obtained from~$\I$ using~\cref{constr:threesat}.
 Then,
 $\I$ is a \yes-instance if and only if $\I$' is a \yes-instance.
\end{lemma}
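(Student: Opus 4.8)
The plan is to prove the biconditional by exhibiting an explicit correspondence between satisfying assignments of the \prob{Weighted 2-SAT} instance~$\I$ and valid assignment sequences for the \msatTsc{} instance~$\I'$ produced by \cref{constr:wsat}. The key observation driving both directions is that~$\frml_1\ceq\bigland_{x\in X}(\xneg{x})$ has exactly one satisfying assignment, namely the all-false assignment~$f_1$ with~$f_1(x)=\false$ for every~$x\in X$, since each variable occurs in its own negated unit clause. Consequently, in any valid sequence~$(f_1,f_2)$ we are forced to take~$f_1$ to be the all-false assignment, and then the Hamming distance between~$f_1$ and~$f_2$ equals precisely the number of variables that~$f_2$ sets true. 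Thus constraint~(ii) with threshold~$d=k$ becomes exactly the weight bound of \prob{Weighted 2-SAT}.

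\LD Suppose~$\I=(X,\frml,k)$ is a \yes-instance, witnessed by a satisfying assignment~$g$ of~$\frml$ with~$|\{x\in X\mid g(x)=\true\}|\leq k$. I would set~$f_1$ to be the all-false assignment and~$f_2\ceq g$. Then~$f_1$ satisfies~$\frml_1$ (every negated unit clause is satisfied) and~$f_2$ satisfies~$\frml_2=\frml$ by choice of~$g$. The number of variables on which~$f_1$ and~$f_2$ differ is exactly the number of variables~$g$ sets true, which is at most~$k=d$. Hence~$(f_1,f_2)$ is a valid sequence and~$\I'$ is a \yes-instance.

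\RD Conversely, suppose~$\I'$ is a \yes-instance, witnessed by a valid sequence~$(f_1,f_2)$. Since~$\frml_1$ forces~$f_1$ to be the all-false assignment, $f_2$ satisfies~$\frml_2=\frml$, and~$|\{x\in X\mid f_1(x)\neq f_2(x)\}|=|\{x\in X\mid f_2(x)=\true\}|\leq d=k$. Therefore~$f_2$ is a satisfying assignment of~$\frml$ with at most~$k$ variables set true, so~$\I$ is a \yes-instance.

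The argument is essentially mechanical once the uniqueness of the satisfying assignment of~$\frml_1$ is noted, so I do not anticipate a genuine obstacle; the only point requiring care is the bookkeeping identity that the Hamming distance from the all-false assignment coincides with the weight (number of true variables), which is what lets the symmetric-difference budget~$d$ play the role of the weight bound~$k$. One small caveat: the lemma statement references \cref{constr:threesat} whereas the construction just given is \cref{constr:wsat}; I would treat this as a typo and use \cref{constr:wsat} throughout.
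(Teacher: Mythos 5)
Your proof is correct and follows essentially the same route as the paper's: both exploit that $\frml_1=\bigland_{x\in X}(\xneg{x})$ forces the first assignment to be all-false, so the Hamming distance to $f_2$ equals the number of true variables and the budget $d=k$ coincides with the weight bound; you are also right that the reference to \cref{constr:threesat} in the lemma statement is a typo for \cref{constr:wsat}. The only cosmetic quibble is that your \LD{} and \RD{} labels are swapped relative to the stated biconditional, but the content of each direction is exactly what is needed.
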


\begin{proof}
 \RD{}
 Let~$f$ be a satisfying \assment{} for~$\I$.
 We claim that~$f_1\colon X\to \{\false\}$,
 $x\mapsto\false$,
 and $f_2\ceq f$ is a solution to~$\I'$.
 Note that $f_1$ and~$f_2$ satisfy~$\frml_1$ and~$\frml_2$,
 respectively.
 Moreover,
 $|\{x\in X \mid f_1(x)\neq f_{2}(x)\}|=|\{x\in X \mid f_2(x)=\true\}|\leq d=k$.
 
 \LD{}
 Let~$(f_1,f_2)$ be a solution to~$\I'$.
 Note that~$f_1\colon X\to \{\false\}$.
 Since~$d=k$,
 there are at most~$k$ variables set to true by~$f_2$.
 Hence,
 $f_2$ is a satisfying \assment{} for~$\I$ with at most~$k$ variables set to true,
 and thus~$\I$ is \yes-instance.
\end{proof}

\begin{proof}[Proof of~\cref{thm:nphardnesstau}]
 \cref{constr:wsat} forms a polynomial-time many-one reduction to an instance with two stages with~$d=k$.
 Note that \prob{Weighted 2-SAT} remains \NP-hard if all literals are positive
 (e.g., via a reduction from~\prob{Vertex Cover}).
 Hence,
 \msatAcr{} is \NP-hard,
 even if the variables appear all negated in one
 and all positive in the other stage. 
 Moreover, 
 unless~\ETHbreaks,
 \msatAcr{} admits no~$n^{o(d)\cdot f(\tau)}$-time algorithm for any function~$f$ since no~$n^{o(k)}$-time algorithm exists for \prob{Weighted 2-SAT}~\cite{ChenHKX06}.
 As \cref{constr:wsat} also forms a \pt,
 \msatAcr{} is \W{1}-hard when parameterized by~$d$ even if~$\tau=2$.
 Moreover,
 \cref{constr:wsat} forms a \lpt{} 
 from \prob{Weighted 2-SAT} parameterized by~$|X|$
 to \msatAcr{} parameterized by~$n\cdot f(\tau)$ for any function~$f$.
 Hence,
 \msatAcr{} admits no problem kernel of size~$\O(n^{2-\eps}\cdot f(\tau))$ for any~$\eps>0$ and function~$f$,
 unless~\NPincoNPslashpoly.
\end{proof}

\begin{remark}
 \cref{thm:nphardnesstau}\eqref{hard:noqk} can be generalized to~\prob{Multistage $q$-SAT}:
 Instead from \prob{Weighted $2$-SAT},
 we reduce 
 (in an analogous way) 
 from~\prob{Weighted $q$-SAT} which admits no problem bikernelization of size~$\O(n^{q-\eps})$, $\eps>0$, unless~\NPincoNPslashpoly~\cite{DellM14}.
 Thus,
 unless~\NPincoNPslashpoly,
 \prob{Multistage $q$-SAT} admits no problem kernel of size~$\O(n^{q-\eps}\cdot f(\tau))$ for any~$\eps>0$ and function~$f$.
\end{remark}

\subsection{From Zero to One Allowed Change}
\label{ssec:onechange}

In this section,
we prove that 
\msatTsc{} becomes \NP-hard if~$d=1$ 
and the maximum number~$m$ of clauses in any stage is six.
In fact,
we prove the following.

\begin{theorem}
 \label{thm:nphardnessd}
\msatTsc{} is \NP-hard, 
even if the number of clauses in each stage is at most six and~$d=1$. 
Moreover, 
unless~\ETHbreaks,
\msatTsc{} admits no~$\O^*(2^{o(n)})$-time algorithm.
\end{theorem}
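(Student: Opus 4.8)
The plan is to prove \NP-hardness of \msatAcr{} under the stated restrictions (at most six clauses per stage, $d=1$) by reducing from a suitable \NP-hard problem, and to obtain the ETH lower bound as a byproduct of making that reduction size-efficient. The key tension in the statement is that we must simultaneously keep the number of clauses per stage tiny (constant, namely six) while only allowing one variable to change between consecutive stages. Since $d=1$ is so restrictive, a single satisfying assignment can only ``drift'' by one variable per step, so the sequence of stages effectively encodes a long computation or traversal in which each individual stage is almost trivially satisfiable but the global coupling through the Hamming-distance-$1$ constraint is what encodes hardness. This strongly suggests reducing from a problem whose instances are naturally described by a long sequence of small local constraints.

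First I would choose the source problem. Given the target of $2^{\Omega(n)}$ (i.e.\ no $\Ost(2^{o(n)})$ algorithm unless \ETHbreaks{}), the natural candidate is a problem known to have no $2^{o(n)}$ algorithm under ETH on instances with $\O(n)$ variables/elements. A clean choice is a variant of \twosatAcr{}-style or \prob{Vertex Cover}-style hardness, or more likely a direct reduction from \prob{$3$-SAT} (which has no $2^{o(n+m)}$ algorithm under ETH via the Sparsification Lemma, so after sparsification $m\in\O(n)$). The idea is to simulate each clause of the $3$-CNF by a short block of stages: because each stage may carry only at most six clauses and consecutive assignments differ in at most one variable, I would let the assignment to the original variables be ``frozen'' in a designated block of positions, and use the stage index together with the $d=1$ budget to sweep through the clauses one at a time, verifying each against the frozen assignment. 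Each verification stage holds only the clause currently being checked plus a constant number of gadget clauses enforcing consistency, keeping the per-stage clause count bounded by six.

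The central mechanism I would build is a ``pointer'' or counter encoded in the variables that advances by exactly one change per step, which is exactly what $d=1$ permits. Concretely, I would reserve variables whose role is to mark ``which clause we are currently testing'', and design the constant-size per-stage formula so that (a)~the frozen truth-values of the simulated variables cannot change while the pointer advances, and (b)~at the stage devoted to clause $C_i$ the formula is satisfiable \emph{only if} the frozen assignment satisfies $C_i$. The $d=1$ constraint forces the solver to commit to one global assignment of the original variables up front and then pay one unit of change per pointer advance, so the only freedom left is the initial guess of the satisfying assignment. Making the per-stage clause budget as low as six will require care: I expect to need a few clauses to lock the frozen variables, one or two to advance the pointer, and a couple to test the current clause, so the counting must be done explicitly and tightly.

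The main obstacle will be reconciling the $d=1$ budget with the need to both \emph{hold} the simulated assignment fixed and \emph{change} the pointer: with only one variable allowed to flip per step, any clause I add to prevent the frozen variables from drifting must be automatically satisfied without consuming the change budget, and the pointer advance must itself be the unique permitted flip. I would resolve this by a unary or ``thermometer'' encoding of the pointer (a block of variables flipped from \false{} to \true{} one at a time, monotonically), so that advancing the pointer is always a single $\false\to\true$ flip and is forced by unsatisfiability of the current stage under no change. The delicate bookkeeping is ensuring the total number of clauses visible in any one stage never exceeds six while still enforcing monotonicity of the pointer, immutability of the frozen block, and the clause test; I expect this clause-count optimization, rather than the correctness argument, to be the hard part. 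For the ETH consequence, I would verify that the construction uses $\O(n+m)=\O(n)$ variables after sparsification and only a polynomial number of stages, so that an $\Ost(2^{o(n)})$ algorithm for \msatAcr{} would yield a $2^{o(n)}$ algorithm for \prob{$3$-SAT}, contradicting ETH.
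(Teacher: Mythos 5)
Your high-level strategy matches the paper's: reduce from \prob{3-SAT}, keep one global assignment to the original variables frozen across all stages by forcing the entire $d=1$ budget to be spent on auxiliary variables, and devote a constant-size block of stages to testing each clause. The paper's \cref{constr:threesat} realizes this with $2m$ stages alternating between $\phi_{\lnot B}$ (``at least two of $b_1,b_2,b_3$ are false'') and $(\ell_1^i\lor b_1)\land(\ell_2^i\lor b_2)\land(\ell_3^i\lor b_3)\land\phi_B$ (``at least two of the $b_j$ are true''), so that with $d=1$ exactly one $b_j$ flips per transition, the $X$-variables never move, and each stage has at most six clauses. The genuine gap in your proposal is that you never explain how a \emph{three}-literal clause $C_i=\ell_1^i\lor\ell_2^i\lor\ell_3^i$ can be tested inside a $2$-CNF stage against the frozen assignment. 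This is the heart of the construction, not the ``clause-count optimization'' you defer. The paper's trick is to split $C_i$ into the three $2$-clauses $(\ell_j^i\lor b_j)$ and to arrange, via the alternation and $d=1$, that in stage $2i$ exactly one $b_j$ is false; the choice of \emph{which} $b_j$ is the per-stage nondeterminism that selects the witnessing literal of $C_i$.

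Your thermometer pointer cannot supply this nondeterminism. If the unique permitted flip at each transition is the deterministic pointer advance $p_i\colon\false\to\true$, then any auxiliary variables used to relax the three $2$-clauses are themselves frozen for the entire run, so the same literal position would have to witness every clause of the formula, which is unsound. If instead you spend the budget on choosing a witness, you lose the mechanism that keeps the $X$-variables frozen. The paper resolves this tension by making the witness-selecting flip and the budget-consuming flip be the \emph{same} flip (among the $b_j$); that is exactly the idea your sketch is missing, and without it the reduction does not go through. A secondary point: the paper's construction has $n=|X|+3$, so the $2^{o(n)}$ lower bound follows from ETH for \prob{3-SAT} with no sparsification, whereas your pointer block adds $\Theta(m)$ variables and therefore genuinely needs the Sparsification Lemma, as you note.
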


\begin{construction}
 \label{constr:threesat}
 Let~$(X,\frml)$ be an instance of \textsc{3-SAT},
 where~$\frml=\bigland_{i=1}^m C_i$ and each clause consists of exactly three literals.
 Let~$\ell_i^j$, $j\in\{1,2,3\}$, denote the literals in~$C_i$ for each~$i\in\set{m}$.
 Construct instance~$(X',\Phi,d)$ of \msatAcr{} as follows.
 First,
 construct~$X'\ceq X\cup B$, where~$B\ceq \{b_1,b_2,b_3\}$.
 Let
 \begin{align*} \frml_B &\ceq \cls{b_1}{b_2}\land \cls{b_1}{b_3} \land \cls{b_2}{b_3}, \text{ and} \\ 
 \frml_{\xneg{B}} &\ceq \cls{\xneg{b_1}}{\xneg{b_2}} \land \cls{\xneg{b_1}}{\xneg{b_3}} \land \cls{\xneg{b_2}}{\xneg{b_3}}.\end{align*}
 Next,
 construct~$\seqf\ceq (\frml_i,\dots,\frml_{2m})$ as follows.
 For each~$i\in\set{m}$,
 construct
 \begin{align*}
  \frml_{2i-1}\ceq \frml_{\xneg{B}}, 
  && \text{and} &&
  \frml_{2i}\ceq 
  \cls{\ell_1^i}{b_1} \land \cls{\ell_2^i}{b_2} \land \cls{\ell_3^i}{b_3}\land \frml_B
 \end{align*}
 Finally, set~$d\ceq 1$.
 \cqed
\end{construction}

\begin{observation}%
 \label{obs:onlybchanges}
 In every solution to an instance obtained from~\cref{constr:threesat},
 in each odd stage
 exactly two~$b_j$ are set to false
 and in each even stage
 exactly two~$b_j$ are set to true.
\end{observation}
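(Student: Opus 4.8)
The plan is to first isolate what the two gadget formulas $\frml_B$ and $\frml_{\xneg{B}}$ enforce on the three special variables in isolation, and then to combine these constraints with the budget $d=1$ via a parity/adjacency argument along the stage sequence. Observe that $\frml_{\xneg{B}}=\cls{\xneg{b_1}}{\xneg{b_2}}\land\cls{\xneg{b_1}}{\xneg{b_3}}\land\cls{\xneg{b_2}}{\xneg{b_3}}$ is satisfied by an assignment exactly when no two of $b_1,b_2,b_3$ are simultaneously set to~$\true$; hence it forces \emph{at most one} $b_j$ to be~$\true$, i.e., at least two to be~$\false$. Symmetrically, $\frml_B=\cls{b_1}{b_2}\land\cls{b_1}{b_3}\land\cls{b_2}{b_3}$ forces \emph{at most one} $b_j$ to be~$\false$, i.e., at least two to be~$\true$. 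Since each odd stage $\frml_{2i-1}$ equals $\frml_{\xneg{B}}$ and each even stage $\frml_{2i}$ contains $\frml_B$ as a conjunct, any satisfying assignment for the sequence has at most one $b_j$ set to~$\true$ in every odd stage and at least two $b_j$ set to~$\true$ in every even stage.

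Next I would bound the change of the $b$-variables across consecutive stages. Since $d=1$, at most one variable (over all of~$X'$) changes its value between stages~$i$ and~$i+1$; in particular the number of $b_j$ set to~$\true$ can change by at most one. Write $t_i$ for this number in stage~$i$. For any odd stage~$i$ and its even successor~$i+1$ we then have $t_i\le 1$, $t_{i+1}\ge 2$, and $|t_{i+1}-t_i|\le 1$. These three inequalities together force $t_i=1$ and $t_{i+1}=2$: from $t_{i+1}\le t_i+1\le 2$ together with $t_{i+1}\ge 2$ we obtain $t_{i+1}=2$, and then $t_i=t_{i+1}-1=1$. Thus the odd stage has exactly one $b_j$ set to~$\true$ (two set to~$\false$) and the even stage has exactly two set to~$\true$ (one set to~$\false$).

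Finally I would observe that this already covers every stage, since the number of stages~$2m$ is even and the stages alternate in parity: every odd stage~$2j-1$ has the even successor~$2j$, so applying the above to each pair $(2j-1,2j)$ pins down $t_{2j-1}=1$ and $t_{2j}=2$ for all $j\in\set{m}$. I do not expect a genuine obstacle here; the only point requiring a little care is the boundary, namely that the argument needs each odd stage to be adjacent to an even stage, which holds precisely because the sequence has even length and starts with an odd stage. No case distinction on the clause literals $\ell_j^i$ is needed, as the counting argument uses only the $b$-variables.
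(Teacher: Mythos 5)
Your proposal is correct and follows essentially the same route as the paper: both first extract that $\frml_{\xneg{B}}$ forces at least two of the $b_j$ to be false (at most one true) in odd stages and $\frml_B$ forces at least two to be true in even stages, and then use the budget $d=1$ to conclude that only one $b_j$ can flip between an odd stage and its even successor, pinning both counts down exactly. Your explicit bookkeeping with $t_i$ and the remark that every odd stage has an even successor merely makes the paper's terser final sentence precise; no new idea is involved.
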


\begin{proof}
 Clearly, 
 in every satisfying \assment{} for~$\frml_{2i-1}=\frml_{\xneg{B}}$,
 $i\in\set{m}$,
 at least two~$b_j$ are set to false.
 In every satisfying \assment{} for~$\frml_{2i}$,
 to satisfy~$\frml_B$,
 at least two variables from~$B$ must be set to true.
 As~$d=1$,
 exactly one of~$B$ being set to false in a satisfying \assment{}~$\frml_{2i-1}$ can be set to true in a satisfying \assment{} for~$\frml_{2i}$,
 which implies that already one of~$B$ must be set to true in~$\frml_{2i-1}$.
\end{proof}

\begin{lemma}%
 \label{lem:threesat:corr}
 Let~$\I=(X,\frml)$ be an instance of \prob{3-SAT},
 and let~$\I'=(X',\seqf,d)$ be an instance of \msatTsc{} obtained from~$\I$ using~\cref{constr:threesat}.
 Then,
 $\I$ is a \yes-instance if and only if $\I$' is a \yes-instance.
\end{lemma}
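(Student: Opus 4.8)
The plan is to prove the biconditional in \cref{lem:threesat:corr} by exhibiting explicit assignment sequences in each direction, relying crucially on \cref{obs:onlybchanges} to control the behaviour of the ``gadget'' variables~$B=\{b_1,b_2,b_3\}$ between consecutive stages. The key structural insight established by \cref{obs:onlybchanges} is that, since~$d=1$, the assignment to~$B$ is forced into a rigid pattern: each odd stage has exactly two of the~$b_j$ false (and hence exactly one true), each even stage has exactly two of the~$b_j$ true (and hence exactly one false), and the single permitted change per stage-transition is entirely consumed by flipping one~$b_j$. This means the variables in~$X$ cannot change their values at all across the whole sequence, so a solution to~$\I'$ induces a single global assignment~$f\colon X\to\tfset$.

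For the forward direction~$(\Rightarrow)$, I would start from a satisfying assignment~$f$ for~$\frml=\bigland_{i=1}^m C_i$ and build the sequence~$(f_1,\dots,f_{2m})$ by keeping~$f_i|_X = f$ fixed on~$X$ in every stage, and defining the~$B$-part so that it marches through the forced pattern. Concretely, in the even stage~$\frml_{2i}$ the clauses~$\cls{\ell_1^i}{b_1}\land\cls{\ell_2^i}{b_2}\land\cls{\ell_3^i}{b_3}$ must be satisfied; since~$f$ satisfies~$C_i$, at least one literal~$\ell_j^i$ is true under~$f$, so I can afford to set the corresponding~$b_j$ to false (satisfying~$\frml_B$ via the other two being true) while the clause~$\cls{\ell_j^i}{b_j}$ is still satisfied by~$\ell_j^i$. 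The remaining task is to thread these even-stage~$B$-assignments together with the odd-stage assignments (where~$\frml_{\xneg B}$ needs two of the~$b_j$ false) so that consecutive stages differ in exactly one~$b_j$ and~$X$ never changes; I would pick, for each even stage, the false~$b_j$ to match a satisfied literal, and choose the odd-stage patterns to interpolate with a single flip, verifying Hamming distance~$\le d = 1$ at each of the~$2m-1$ transitions.

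For the reverse direction~$(\Leftarrow)$, I would take a solution~$(f_1,\dots,f_{2m})$ to~$\I'$ and argue that its restriction to~$X$ is constant across all stages (because each of the~$2m-1$ transitions already spends its one allowed change on a~$b_j$, by \cref{obs:onlybchanges}), giving a single~$f\coloneqq f_1|_X$. Then I must show~$f$ satisfies every clause~$C_i$: looking at the even stage~$\frml_{2i}$, \cref{obs:onlybchanges} tells me exactly one~$b_j$ is set false there, so the clause~$\cls{\ell_j^i}{b_j}$ forces~$\ell_j^i$ to be true under~$f_{2i}$, hence under~$f$; therefore~$C_i$ is satisfied. Iterating over all~$i\in\set{m}$ shows~$f$ satisfies~$\frml$, so~$\I$ is a \yes-instance.

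The main obstacle I anticipate is the forward-direction bookkeeping: one must simultaneously respect the forced~$B$-pattern of \cref{obs:onlybchanges}, keep~$X$ frozen, and ensure that the \emph{particular} choice of which~$b_j$ is false in each even stage is compatible both with a satisfied literal of~$C_i$ and with the single-flip constraint linking it to the neighbouring odd stages. Verifying that such an interpolation always exists — i.e.\ that one can walk through the weight-one and weight-two patterns on three bits using only single flips while hitting a prescribed ``false index'' at each even step — is where the real care is needed, though it should reduce to a short case analysis on the three-bit gadget. (I also note in passing that \cref{lem:wsat:corr} as stated refers to \cref{constr:threesat} but is the lemma for \cref{constr:wsat}; this is a typo and does not affect the present proof.)
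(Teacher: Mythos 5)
Your proposal follows the paper's proof almost exactly: freeze the $X$-variables across all stages, use \cref{obs:onlybchanges} to drive the backward direction, and in the forward direction route the $B$-variables through alternating weight-two (even stages) and weight-one (odd stages) patterns with one flip per transition. The backward direction is complete and matches the paper. In the forward direction, however, you explicitly leave open the single step on which the whole argument rests, namely that the even-stage choices can always be interpolated by odd-stage patterns using one flip per transition. This does go through, and the paper closes it with a one-line pigeonhole argument that you should make explicit: $f_{2i-2}$ and $f_{2i}$ each set exactly two of the three variables in $B$ to true, so these two two-element subsets of $B$ intersect in some $b_j$; setting $f_{2i-1}(b_j)=\true$ and the other two variables of $B$ to false satisfies $\frml_{\xneg{B}}$, and $f_{2i-1}$ differs from each of $f_{2i-2}$ and $f_{2i}$ in exactly one variable (the unique $b\neq b_j$ that the respective neighbour sets to true). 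One also needs the boundary case $i=1$, where there is no preceding even stage; the paper handles it by interpreting $f_{2i-2}$ as $f_{2i}$ there. A further small point: the paper does \emph{not} insist that each even stage sets to false the $b_j$ indexed by a satisfied literal of $C_i$ --- it only requires that \emph{some} choice of two true $b$'s satisfies $\frml_{2i}$, which exists because at least one clause $\cls{\ell_j^i}{b_j}$ is already satisfied by its literal; your more rigid prescription also works but is not needed. So the proposal is correct in outline and identical in strategy, but incomplete at precisely the step you yourself flag as the crux.
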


\begin{proof}
 \RD{}
 Let~$f\colon X\to \{\false,\true\}$ be a satisfying \assment{} for~$\frml$.
 We construct \assment{}s~$f_1,\dots,f_\tau\colon X'\to \{\false,\true\}$ as follows.
 Let~$f_i(x)=f(x)$ for all~$i\in\set{\tau}$ and all~$x\in X$.
 Next,
 for each~$i\in\set{m}$,
 $f_{2i}$ assigns exactly two variables from~$B$ to true such that~$\frml_{2i}$ is satisfied.
 This is possible since at least one clause from~$\frml_{2i}$ is already set to true by one true literal.
 It remains to show that for each~$i\in\set{m}$,
 there is an \assment{} of~$f_{2i-1}$ to the variables from~$B$ such that exactly two are set to false (in which case~$\frml_{2i-1}$ is satisfied),
 and~$|\{b\in B \mid f_{2i-2}(b)\neq f_{2i-1}(b)\}|\leq 1$ (if~$i=1$, interpret~$f_{2i-2}=f_{2i}$) and $|\{b\in B \mid f_{2i-1}(b)\neq f_{2i}(b)\}|\leq 1$.
 Since~$|B|=3$,
 there is a~$j\in\set{3}$ such that~$f_{2i-2}(b_j)=f_{2i}(b_j)=\true$.
 Set~$f_{2i-1}(b_j)=\true$ and~$f_{2i-1}(b_\ell)=\false$ for~$\ell\in\set{3}\setminus\{j\}$.
 Observe that~$|\{b\in B\setminus\{b_j\}\mid f_{2i-2}(b)\neq f_{2i-1}(b)\}| = |\{b\in B\setminus\{b_j\}\mid f_{2i-1}(b)\neq f_{2i}(b)\}|=1$,
 what we needed to show.
 
 \LD{}
 By~\cref{obs:onlybchanges},
 between every two consecutive stages,
 exactly one variable in~$B$ changes its true-false value.
 Hence,
 each variable from~$X$ is assigned the same value in each stage,
 i.e.,
 $f_i(x)=f_j(x)$ for every~$x\in X$ and every~$i,j\in\set{\tau}$.
 Let~$f:X\to\{\false,\true\}$ be the \assment{} of the variables in~$X$ with~$f(x)\ceq f_1(x)$ for all~$x\in X$. 
 Since in every even stage, 
 by~\cref{obs:onlybchanges},
 exactly one variable from~$B$ is set to false,
 at least one literal must be set to true.
 It follows that each clause in the \textsc{3-SAT} instance is satisfied by~$f$,
 that is,
 $f$ is a satisfying \assment{} for~$\I$.
 Thus,
 $\I$ is a \yes-instance.
\end{proof}

\begin{proof}[Proof of~\cref{thm:nphardnessd}]
 \cref{constr:threesat} forms a polynomial-time many-one reduction to an instance with $d=1$, 
 $m=6$,
 and~$n= |X|+3$.
 Hence,
 \msatAcr{} is \NP-hard,
 even if~$d=1$ and~$m=6$,
 and,
 unless~\ETHbreaks,
 admits no~$\O^*(2^{o(n)})$-time algorithm since no~$\O^*(2^{o(|X|)})$-time algorithm exists for \prob{3-SAT}~\cite{ChenHKX06}.
\end{proof}

\subsection{From All to All But~{\boldmath\(k\)} Allowed Changes}
\label{ssec:allbutk}

In this section,
we prove that \msatTsc{} is \W{1}-hard when parameterized by the lower bound~$n-d$ on the number of unchanged variables between any two consecutive stages.

\begin{theorem}
 \label{thm:allbutkNPh}
 \msatTsc{} is
 \W{1}-hard when parameterized by~$n-d$
 even if~$\tau=2$,
 and,
 unless~\ETHbreaks,
 admits no~$\Ost(n^{o(n-d)\cdot f(\tau)})$-time algorithm for any function~$f$.
\end{theorem}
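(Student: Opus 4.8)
The plan is to give a parsimonious parametric transformation from \prob{Independent Set} parameterized by the solution size~$k$, reusing the ``pin one stage to a constant \assment'' idea of \cref{constr:wsat} but with the roles of \false{} and \true{} swapped so that the dual parameter~$n-d$ takes over. \prob{Independent Set} (equivalently \clique{} on the complement) is \W{1}-hard when parameterized by~$k$ and, unless \ETHbreaks, admits no $f(k)\cdot n^{o(k)}$-time algorithm on $n$-vertex graphs~\cite{DowneyF99,FlumG06,ChenHKX06}. The crucial observation is that if one stage forces every variable to \true, then the number of \emph{agreements} with the second stage equals the number of variables set \true{} there; requiring at least $n-d$ agreements thus becomes a \emph{lower} bound of $n-d$ on the number of \true{} variables, i.e.\ a ``max-ones'' constraint, which for the natural non-adjacency clauses encodes a large independent set.

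Concretely, given a graph $G=(V,E)$ with $|V|=n$ and an integer~$k$, I would build an \msatAcr{} instance $(X,\seqf=(\frml_1,\frml_2),d)$ with $X\ceq V$, set $\frml_1\ceq \bigland_{v\in V} (v)$ (so that the unique satisfying \assment{} of $\frml_1$ is the all-\true{} one), set $\frml_2\ceq \bigland_{\{u,v\}\in E} \cls{\xneg u}{\xneg v}$ (so that in any satisfying \assment{} the set of \true{} variables is an independent set of~$G$), and put $d\ceq n-k$, so that $n-d=k$. Both stages are satisfiable, hence \cref{rrule:trivialno} never triggers. Crucially, no auxiliary variables are introduced, so the number of variables stays exactly $n=|V|$; this parsimony is what will make the \ETHbreaks-based lower bound tight.

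For correctness I would argue as in \cref{lem:wsat:corr}. In the forward direction, from an independent set~$S$ with $|S|\ge k$ take $f_1\equiv\true$ and $f_2(v)=\true$ iff $v\in S$; then $f_2$ satisfies~$\frml_2$ (no edge has both endpoints in~$S$) and the number of disagreements is $|V\setminus S|=n-|S|\le n-k=d$. In the backward direction, any solution has $f_1\equiv\true$ forced by~$\frml_1$, so $S\ceq\{v\mid f_2(v)=\true\}$ satisfies $|S|=n-|\{v\mid f_2(v)=\false\}|\ge n-d=k$, and $S$ is independent because $f_2$ satisfies every clause $\cls{\xneg u}{\xneg v}$. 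Hence $G$ has an independent set of size~$\ge k$ if and only if the constructed instance is a \yes-instance.

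Since the map sends the parameter~$k$ to $n-d=k$, it is a parametric transformation and yields \W{1}-hardness of \msatTsc{} parameterized by~$n-d$, already for $\tau=2$. For the running-time bound, an $\Ost(n^{o(n-d)\cdot f(\tau)})$-time algorithm would, on these $\tau=2$ instances, run in $\Ost(n^{o(k)\cdot f(2)})=\Ost(n^{o(k)})$ time (a constant multiple of $o(k)$ is still $o(k)$) and thus solve \prob{Independent Set} in $n^{o(k)}$ time up to polynomial factors, contradicting the ETH. I expect the main point of care to be not a difficult combinatorial gadget but the bookkeeping: checking that the all-\true{} stage really converts the distance budget into the ``at least $n-d$ variables \true'' constraint, and that keeping $X=V$ gadget-free preserves $n-d=k$ \emph{exactly}, which is precisely what is needed for the sharp $n^{o(n-d)}$ lower bound rather than a weaker one.
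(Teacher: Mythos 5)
Your proposal is correct and follows essentially the same route as the paper: two stages, the first pinning a constant \assment{} so that the distance budget~$d=n-k$ turns into a lower bound of~$k$ on the number of agreeing (here, \true) variables, and the second encoding independence via clauses~$\cls{\xneg u}{\xneg v}$. The only difference is that the paper reduces from \prob{Multicolored Independent Set} (adding per-color-class ``at most one'' clauses and a pigeon-hole argument), whereas you reduce from plain \prob{Independent Set}; both sources have the required \W{1}-hardness and $n^{o(k)}$ ETH lower bounds, so your slightly leaner construction works equally well.
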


\noindent
We reduce from the following \NP-hard problem:
\problemdef{Multicolored Independent Set (MIS)}
{An undirected, $k$-partite graph~$G=(V^1,\dots,V^k,E)$.}
{Is there an independent set~$S$ such that~$|S\cap V^i|=1$ for all~$i\in\set{k}$?}

\noindent
\prob{MIS} is
\W{1}-hard with respect to~$k$~\cite{FHRV09} and 
unless~\ETHbreaks,
there is no~$f(k)\cdot n^{o(k)}$-time algorithm~\cite{LokshtanovMS11}.

\begin{construction}
 \label{constr:allbutkNPh}
 Let~$\I=(G=(V^1,\dots,V^k,E))$ be an instance of~\prob{MIS} and let~$V\ceq V^1\uplus\dots\uplus V^k$, 
 $n\ceq |V|$,
 and~$V^i=\{v^i_1,\dots,v^i_{|V_i|}\}$ for all~$i\in\set{k}$.
 We construct an instance~$\I'=(X,(\frml_1,\frml_2),d)$ 
 with~$d\ceq n-k$ as follows.
 Let~$X:=X^1\cup\dots\cup X^k$ with~$X^i=\{x^i_j\mid v^i_j\in V_i\}$ for all~$i\in\set{k}$.
 Let for all~$i\in\set{k}$
 \begin{align*} 
 \phi^*_i &\ceq \bigland_{j,j'\in \set{|V^i|},\, j\neq j'} (\xneg{x_j^i}\lor \xneg{x_{j'}^i}), &&\text{and let}& \phi_E&\ceq \bigland_{\{v^i_j,v^{i'}_{j'}\}\in E} (\xneg{x^i_j}\lor \xneg{x^{i'}_{j'}}).
 \end{align*}
 Let
 \begin{align*} 
 \phi_1 &\ceq \bigland_{x\in X} (x) &&\text{and}&
 \phi_2 &\ceq \phi_E\land \bigland_{i\in\set{k}} \phi^*_i.
 \end{align*}
 This finishes the construction.
 \cqed
\end{construction}

\begin{lemma}%
 \label{lem:allbutkNPh:cor}
 Let~$\I=(G=(V^1,\dots,V^k,E))$ be an instance of \prob{MIS},
 and let~$\I'=(X,(\frml_1,\frml_2),d)$ be an instance of \msatTsc{} obtained from~$\I$ using~\cref{constr:allbutkNPh}.
 Then,
 $\I$ is a \yes-instance if and only if $\I$' is a \yes-instance.
\end{lemma}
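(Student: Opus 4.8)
The plan is to show that, for the two-stage instance produced by \cref{constr:allbutkNPh}, the satisfying assignments of the second stage that stay within the allowed budget of the first stage correspond exactly to multicolored independent sets of $G$. The pivotal observation is that $\phi_1 = \bigwedge_{x \in X}(x)$ is a conjunction of single-literal clauses, so its only satisfying \assment{} is the all-true one; hence in any solution $(f_1,f_2)$ one must have $f_1(x) = \true$ for all $x \in X$. Consequently $f_1$ and $f_2$ disagree precisely on the variables that $f_2$ sets to $\false$, and the distance constraint $|\{x : f_1(x) \neq f_2(x)\}| \leq d = n-k$ says exactly that $f_2$ sets \emph{at least} $k$ variables to $\true$.

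For the forward direction of the reduction, I would then exploit the structure of $\phi_2 = \phi_E \land \bigwedge_i \phi^*_i$. Each $\phi^*_i$ encodes ``at most one variable of $X^i$ is true'', so over the $k$ color classes $f_2$ sets \emph{at most} $k$ variables to $\true$. Combining this upper bound with the lower bound of $k$ from the budget forces $f_2$ to set \emph{exactly one} variable true in each class $X^i$. Letting $S$ be the set of vertices $v^i_j$ with $f_2(x^i_j) = \true$, this yields $|S \cap V^i| = 1$ for every $i \in \set{k}$; and since $f_2$ also satisfies $\phi_E$, no edge of $E$ has both endpoints in $S$, so $S$ is independent. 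Hence $S$ is a multicolored independent set and $\I$ is a \yes-instance.

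For the converse I would start from a multicolored independent set $S$, put $f_1 \equiv \true$, and define $f_2$ to be true exactly on the variables $x^i_j$ with $v^i_j \in S$. One checks directly that $f_2$ satisfies every $\phi^*_i$ (at most one true per class), satisfies $\phi_E$ (independence of $S$), and differs from $f_1$ on exactly $n - |S| = n - k = d$ variables, so $(f_1,f_2)$ is a valid solution to $\I'$.

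The backward direction is routine; the real content is the forward direction, which is a counting squeeze. The Hamming budget $d = n-k$ provides a lower bound of $k$ on the number of true variables in $f_2$, while the $\phi^*_i$-clauses provide a matching upper bound of $k$, and only their coincidence yields ``exactly one per class'', which is precisely what encodes the multicolored requirement. The one point to state carefully is that $d$ is chosen as $n-k$ exactly so that these two bounds meet; any looser budget would admit assignments leaving some class empty, and any tighter one would be infeasible. Since the construction is polynomial-time and an \lpt{} mapping the \prob{MIS}-parameter $k$ to $n - d = k$, the \W{1}-hardness and the $n^{o(k)}$ lower bound of \prob{MIS} transfer to \msatTsc{} parameterized by $n - d$, giving \cref{thm:allbutkNPh}.
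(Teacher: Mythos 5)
Your proposal is correct and follows essentially the same route as the paper's proof: $\phi_1$ forces $f_1$ to be all-true, the budget $d=n-k$ lower-bounds the number of true variables in $f_2$ by $k$, the clauses $\phi^*_i$ upper-bound it by one per class, and $\phi_E$ enforces independence. Your ``counting squeeze'' phrasing is just a restatement of the paper's pigeonhole step, and your converse direction matches the paper's (and is slightly more careful in that it explicitly verifies the Hamming-distance constraint).
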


\begin{proof}
 \RD{}
  Let~$S=\{v^1_{j_1},\dots,v^k_{j_k}\}\subseteq V$ be an independent set with~$S\cap V^i=\{v_{j_i}^i\}$ for all~$i\in\set{k}$.
  Let~$X_S\ceq \{x^1_{j_1},\dots,x^k_{j_k}\}$ be the variables in~$X$ corresponding to the vertices in~$S$.
  Let~$f_1\colon X\to \tfset,x\mapsto\true$ and~$f_2\colon X\to\tfset$ be defined as
  \[ f_2(x) = \begin{cases} \true,& \cif x\in X_S,\\ \false,& \otw. \end{cases} \] 
  Clearly,
  $f_1$ satisfies~$\frml_1$.
  Further, 
  observe that for each~$r\in\set{k}$,
  $f_2$ satisfies~$\phi_r^*$ since all variables from~$X^i$ except for~$x_{j_i}^i$ is set to~$\false$.
  Since~$S$ is an independent set,
  and only variables corresponding to vertices from~$S$ are set to true by~$f_2$,
  $f_2$ satisfies~$\phi_E$.
  It follows that~$f_2$ satisfies~$\phi_2$,
  and hence,
  $f=(f_1,f_2)$ is a solution for~$\I'$.
  
 \LD{}
  Let~$f=(f_1,f_2)$ be a solution to~$\I'$.
  Let~$S\ceq \{v_j^i\in V\mid f_2(v_j^i)=\true\}$.
  We claim that~$S$ is an independent set in~$G$ with~$|S\cap V^i|=1$ for all~$i\in\set{k}$.
  
  First,
  observe that~$S$ is an independent set in~$G$:
  Suppose not,
  then there are~$v^i_j,v^{i'}_{j'}\in S$ such that~$\{v^i_j,v^{i'}_{j'}\}\in E$.
  By construction,
  $f_2(x^i_j)=f_2(x^{i'}_{j'})=\true$.
  Since~$\phi_E$ contains the clause~$(\xneg{x^i_j}\lor \xneg{x^{i'}_{j'}})$,
  $f_2$ does not satisfy~$\phi_E$ 
  (and, thus, $\phi_2$),
  contradicting the fact that~$f$ is a solution.
  It follows that~$S$ is an independent set in~$G$.
  
  It remains to show that $|S\cap V^i|=1$ for all~$i\in\set{k}$.
  Observe that for all~$i\in\set{k}$,
  $|\{x\in X^i\mid f_2(x)=\true\}|\leq 1$.
  Suppose not,
  that is,
  there is some~$i\in\set{k}$ 
  such that there are~$x,y\in X^i$ with~$x\neq y$ and~$f_2(x)=f_2(y)=\true$.
  By construction,
  $\phi_i^*$ contains
  the clause~$(\xneg{x}\lor\xneg{y})$,
  which is evaluated to false under~$f_2$.
  This is a contradiction to the fact that~$f$ is a solution.
  
  Observe that for all~$i\in\set{k}$,
  $|\{x\in X^i\mid f_2(x)=\true\}|> 0$:
  By construction of~$\phi_1$,
  we know that~$f_1(x)=\true$ for all~$x\in X$.
  Since~$d=n-k$,
  there are at least~$k$ vertices set to true by~$f_2$.
  If for some~$i\in\set{k}$ we have that~$|\{x\in X^i\mid f_2(x)=\true\}|= 0$,
  then,
  by the pigeon-hole principle,
  there is an~$i'\in\set{k}$ with~$i\neq i'$ 
  such that~$|\{x\in X^i\mid f_2(x)=\true\}|\geq 2$,
  which yields a contradiction as discussed.
  Thus,
  $|S\cap V^i|=1$ for all~$i\in\set{k}$.
  It follows that
  $S$ is a solution to~$\I$.
\end{proof}

\begin{proof}[Proof of~\cref{thm:allbutkNPh}]
 \cref{constr:allbutkNPh} runs in polynomial time 
 and outputs an equivalent instance~(\cref{lem:allbutkNPh:cor}) with two stages and~$d=n-k$.
 As \cref{constr:wsat} also forms a \pt,
 \msatAcr{} is \W{1}-hard when parameterized by~$n-d$ even if~$\tau=2$.
 Moreover, 
 unless~\ETHbreaks,
 \msatAcr{} admits no~$n^{o(n-d)\cdot f(\tau)}$-time algorithm for any function~$f$ since no~$n^{o(k)}$-time algorithm exists for \prob{MIS}.
\end{proof}

\section{Fixed-Parameter Tractability Regarding the Number of Variables and {\boldmath\(m+n-d\)}}
\label{sec:fpt}

In this section,
we prove that \msatTsc{} is fixed-parameter tractable regarding the number of variables (\cref{ssec:fptn})
and regarding the parameter~$m+n-d$,
the maximum number of clauses over all input formulas and the minimum number of variables not changing between any two consecutive stages (\cref{ssec:fptmnd}).

\subsection{Fixed-Parameter Tractability Regarding the Number of Variables}
\label{ssec:fptn}

We prove that \msatTsc{} is fixed-parameter tractable regarding the number of variables.

\begin{theorem}%
 \label{thm:fptn}
 \msatTsc{} is solvable in~$\O(\min\{2^{n}n^d,4^n\}\cdot\tau\cdot(n+m))$ time.
\end{theorem}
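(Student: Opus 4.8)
The plan is to solve the problem by dynamic programming over the stages, where the state records, for the current stage, the set of satisfying assignments reachable through a valid prefix of the solution. Concretely, for each $i\in\set{\tau}$ I would maintain the set $R_i$ of all $f\colon X\to\tfset$ such that $f$ satisfies $\frml_i$ and there exist $f_1,\dots,f_{i-1}$ with $f_j$ satisfying $\frml_j$, with $|\{x\in X\mid f_j(x)\neq f_{j+1}(x)\}|\le d$ for all $j<i$, and with $f_i=f$. The base case is $R_1=\{f\mid f\text{ satisfies }\frml_1\}$, computed by testing all $2^n$ assignments in $\O(n+m)$ time each, and the recurrence is
\[
 R_{i+1}=\bigl\{f'\colon X\to\tfset \;\big|\; f'\text{ satisfies }\frml_{i+1}\text{ and }\exists f\in R_i\text{ with }|\{x\in X\mid f(x)\neq f'(x)\}|\le d\bigr\}.
\]
The instance is a \yes-instance if and only if $R_\tau\neq\emptyset$. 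Representing each $R_i$ as a Boolean array indexed by the $2^n$ assignments gives $\O(1)$ membership tests. Correctness follows by a routine induction on $i$ showing that $R_i$ is exactly the set described; the forward direction extends a prefix by one stage and the backward direction truncates it.

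The running time comes entirely from the transition step, for which I would use two strategies and take the cheaper one per transition. In the first strategy, for every $f\in R_i$ I enumerate all $f'$ at Hamming distance at most $d$ from $f$ and test each against $\frml_{i+1}$; by the counting fact from the preliminaries there are $\sum_{j=0}^d\binom{n}{j}\le 2n^d$ such $f'$, so with $|R_i|\le 2^n$ and an $\O(n+m)$ satisfiability test this costs $\O(2^n n^d(n+m))$ per stage. In the second strategy I iterate over all $2^n$ candidate assignments $f'$, keep those satisfying $\frml_{i+1}$, and for each surviving $f'$ scan $R_i$ to check whether some $f$ lies within Hamming distance $d$; this costs $\O(2^n(n+m)+4^n n)=\O(4^n(n+m))$ per stage. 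Since the per-transition bound does not depend on $i$, choosing the cheaper strategy and summing over the $\tau-1$ transitions (the base case being dominated) yields the claimed $\O(\min\{2^n n^d,4^n\}\cdot\tau\cdot(n+m))$ bound.

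The conceptual content is light; the main point requiring care is the time analysis, in particular the bound $\sum_{j=0}^d\binom{n}{j}\le 2n^d$ on the number of Hamming-$d$ neighbors (exactly the inequality recorded in the preliminaries) and the bookkeeping needed to express the two transition costs under a single $\min$. The space usage stays within $2^{\O(n)}$, since only the two consecutive arrays $R_i$ and $R_{i+1}$ need to be kept at any time. Note that \cref{rrule:trivialno} need not even be applied first: if some stage $\frml_i$ is unsatisfiable then $R_i=\emptyset$ and hence $R_\tau=\emptyset$, so the dynamic program correctly returns \no.
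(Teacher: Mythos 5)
Your proposal is correct and is essentially the paper's own argument in different clothing: the paper builds the layered digraph of satisfying assignments per stage (with arcs for Hamming distance at most $d$) and checks for an $s$-$t$ path, which is exactly your forward reachability DP, and your two transition strategies correspond precisely to the paper's two bounds ($2^n n^d$ via outdegree, $4^n$ via all pairs) on the number of arcs between consecutive layers.
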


\begin{proof}
 Let~$\I=(X,\frml,d)$ be an instance of~\msatAcr{} with~$\seqf=(\frml_1,\dots,\frml_\tau)$.
 Construct the digraph~$D$ with vertex set~$V=V^1\uplus\dots\uplus V^\tau \uplus \{s,t\}$ and arc set~$A$ as follows.
 Add two designated vertices~$s$ and~$t$ to~$V$.
 For each~$i\in\set{\tau}$,
 for every \assment{}~$f$ satisfying~$\frml_i$,
 add a vertex~$v_f^i$ to~$V^i$.
 Note that there are at most~$2^n$ \assment{}s,
 where we can test for each \assment{} whether it is satisfying in~$\O(n+m)$ time.
 Add the arc~$(s,v)$ for all~$v\in V^1$ and the arc~$(v,t)$ for all~$v\in V^\tau$.
 Moreover,
 for each~$i\in\set{\tau-1}$,
 add the arc~$(v_f^i,v_g^{i+1})$ if and only if~$|\{x\in X\mid f(x)\neq g(x)\}|\leq d$.
 This finishes the construction of~$D$.
 Note that~$|V^i|\leq 2^n$,
 and each vertex (except for~$s$) has outdegree at most~$\sum_{j=1}^d \binom{n}{j}\leq n^d$.
 Hence,
 $|A|\in \O(\min\{2^{n}n^d,4^n\}\tau)$.
 
 It is not difficult to see that~$D$ admits an~$s$-$t$ path
 if and only if
 $\I$ is a~\yes-instance 
 (see,
 e.g.,
 \cite{FluschnikNRZ19,FluschnikNSZ20,BredereckFK20}).
 Checking whether~$D$ admits an~$s$-$t$ path can be done in~$\O(|V|+|A|)$.
\end{proof}

\begin{remark}
  \label{rem:adaptfptn}
 \cref{thm:fptn} is asymptotically optimal regarding~$n$ unless \ETHbreaks{}~(\cref{thm:nphardnessd}).
Moreover,
 \cref{thm:fptn} is easily adaptable to~\prob{Multistage $q$-SAT} with~$q\geq 3$ as,
 for every~$q\geq 3$,
 the number of \assment{}s is~$2^n$ 
 and each is verifiable in linear time.
\end{remark}

\subsection{Fixed-Parameter Tractability Regarding {\boldmath\(m+n-d\)}}
\label{ssec:fptmnd}

We prove that \msatTsc{} is fixed-parameter tractable regarding the parameter~$m+n-d$.

\begin{theorem}
 \label{thm:fptmnd}
 \msatTsc{} is solvable in~$\O(4^{2(m+n-d)}\tau(n+m))$ time.
\end{theorem}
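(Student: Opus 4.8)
The plan is to combine \cref{rrule:trivialno} with a dichotomy on whether the change budget~$d$ is large or small relative to~$m$, invoking the algorithm of \cref{thm:fptn} only in the small-budget case. First I would apply \cref{rrule:trivialno}, so that every stage admits a satisfying \assment{} (otherwise the answer is \no). For each~$i\in\set{\tau}$ let~$Y_i\subseteq X$ denote the set of variables occurring in~$\frml_i$. Since each clause of a~$2$-CNF contains at most two variables and~$\frml_i$ has at most~$m$ clauses, we have~$|Y_i|\leq 2m$.

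The key observation is a threshold at~$d=2m$. If~$d\geq 2m$, then I claim the reduced instance is always a \yes-instance: pick any satisfying \assment{}~$f_1$ for~$\frml_1$, and for each~$i\geq 2$ obtain~$f_i$ from a satisfying \assment{} of~$\frml_i$ on~$Y_i$ by setting~$f_i(x)\ceq f_{i-1}(x)$ for all~$x\notin Y_i$. Then~$f_i$ satisfies~$\frml_i$ (as~$\frml_i$ depends only on~$Y_i$), and~$f_{i-1}$ and~$f_i$ differ only within~$Y_i$, hence in at most~$|Y_i|\leq 2m\leq d$ variables. Thus~$(f_1,\dots,f_\tau)$ is a solution, and one outputs \yes{} in~$\O(\tau(n+m))$ time.

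It remains to handle~$d<2m$, where I would bound the number of variables by the parameter itself: from~$d\leq 2m-1$ we get~$n=d+(n-d)<2m+(n-d)\leq 2(m+(n-d))$, so~$n<2(m+n-d)$. Running the algorithm of \cref{thm:fptn} in its~$\O(4^n\cdot\tau(n+m))$ form then yields running time~$\O(4^{2(m+n-d)}\cdot\tau(n+m))$, since~$4^n<4^{2(m+n-d)}$. Combining both cases gives the claimed bound, with the large-budget branch dominated by the other.

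The conceptual heart---the only step beyond \cref{thm:fptn}---is recognizing that a budget of~$d\geq 2m$ makes every transition achievable by freezing the free variables to their previous values, which both settles the large-budget case outright and, by contraposition, bounds~$n$ in terms of~$m+n-d$ in the remaining case. The routine parts are the~$|Y_i|\leq 2m$ bound per stage and the arithmetic~$n<2(m+n-d)$; I would additionally verify the degenerate case~$m=0$ (all stages vacuously satisfiable, so~$d\geq 0=2m$ and the constant \assment{} witnesses \yes) to confirm the threshold argument is exhaustive.
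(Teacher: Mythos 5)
Your proposal is correct and follows essentially the same route as the paper: the paper's Lemma~\ref{lem:2md} is exactly your large-budget case (freeze variables outside each stage's support, giving per-transition change at most $2m\leq d$), and the complementary case bounds $n< 2(m+n-d)$ and invokes \cref{thm:fptn} in its $\O^*(4^n)$ form, just as you do. The only cosmetic difference is that the paper phrases the dichotomy as $2(m+n-d)\gtrless 2n-d$ (equivalent to $2m\gtrless d$) and places the boundary case $d=2m$ on the other side, which changes nothing.
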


\noindent
To prove~\cref{thm:fptmnd},
we will show that either~\cref{thm:fptn} applies with~$n\leq 2(m+n-d)$
or the following.

\begin{lemma}%
 \label{lem:2md}
 \msatTsc{} solvable in~$\O(\tau(n+m))$ time if~$2m<d$.
\end{lemma}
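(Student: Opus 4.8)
\msatTsc{} is solvable in $\O(\tau(n+m))$ time if $2m < d$.

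Let me think about what $2m < d$ gives us. We have $\tau$ stages, each a $2$-CNF formula with at most $m$ clauses. The budget $d$ on the Hamming distance between consecutive assignments exceeds $2m$. I want to show that under this condition, the distance constraint is essentially free, so the problem decomposes into solving each stage independently.

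The key observation should be this: each stage $\frml_i$ has at most $m$ clauses, and each clause has at most $2$ literals, so $\frml_i$ mentions at most $2m$ variables. The remaining $n - 2m$ variables (or more) are "free" in $\frml_i$ — their values don't affect whether $\frml_i$ is satisfied. Now, given any two satisfying assignments for consecutive stages, the variables genuinely constrained are few; I can always align the free variables to reduce the distance below the budget.

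So here's my plan. First, I'd apply \cref{rrule:trivialno}, so every stage is satisfiable. For each $i \in \set{\tau}$, compute in linear time (via \twosatAcr{}) an arbitrary satisfying assignment $g_i$ of $\frml_i$; total time $\O(\tau(n+m))$. The hope is to output \yes{} unconditionally, but I must verify a valid global solution exists.

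\begin{proof}
We may assume each stage is satisfiable, by \cref{rrule:trivialno}. For each~$i\in\set{\tau}$, let~$Y_i\subseteq X$ denote the set of variables that actually appear (positively or negatively) in~$\frml_i$. Since~$\frml_i$ is a~2-CNF with at most~$m$ clauses, we have~$|Y_i|\leq 2m$. Compute, in~$\O(n+m)$ time each (hence~$\O(\tau(n+m))$ in total) via~\twosatAcr{}, a partial \assment{}~$h_i\colon Y_i\to\tfset$ satisfying~$\frml_i$; the remaining variables in~$X\setminus Y_i$ may be assigned arbitrarily without affecting satisfaction of~$\frml_i$.

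We construct a global solution~$f_1,\dots,f_\tau$ inductively so that for each~$i$, $f_i$ satisfies~$\frml_i$ and~$|\{x\mid f_{i-1}(x)\neq f_i(x)\}|\leq d$. Set~$f_1\colon X\to\tfset$ to equal~$h_1$ on~$Y_1$ and~$\false$ elsewhere. Given~$f_{i-1}$, define~$f_i$ to equal~$h_i$ on~$Y_i$ and to equal~$f_{i-1}$ on~$X\setminus Y_i$. Then~$f_i$ satisfies~$\frml_i$, and the two \assment{}s~$f_{i-1}$ and~$f_i$ can differ only on~$Y_i$, so
\[
 |\{x\in X\mid f_{i-1}(x)\neq f_i(x)\}|\leq |Y_i|\leq 2m < d.
\]
Hence all distance constraints are satisfied, and~$(f_1,\dots,f_\tau)$ is a solution. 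Therefore the instance is a \yes-instance whenever every stage is satisfiable, and the whole procedure runs in~$\O(\tau(n+m))$ time.
\end{proof}

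\noindent
The main subtlety is to realize that each stage constrains at most~$2m$ variables, so only these can be forced to change between consecutive stages; copying the previous assignment on the untouched variables keeps the per-step distance at most~$2m<d$, which makes the distance budget vacuous and decouples the stages entirely.
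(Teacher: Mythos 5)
Your proof is correct and follows essentially the same route as the paper's: both identify the at most~$2m$ variables occurring in each stage's formula, compute a satisfying partial \assment{} per stage, and extend it by copying the previous stage's values on the untouched variables, bounding each consecutive Hamming distance by~$2m<d$. No substantive differences.
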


\begin{proof}
 Let~$\I=(X,\frml,d)$ be an instance of~\msatAcr{} with~$\seqf=(\frml_1,\dots,\frml_\tau)$ on~$n$ variables and each formula contains at most~$m$ clauses.
 Due to~\cref{rrule:trivialno},
 we can safely assume that each formula of~$\seqf$ admits a satisfying \assment{}.
 Let $X_i\subseteq X$ be the set of variables appearing as literals in~$\frml_i$ for each~$i\in\set{\tau}$.
 Note that~$|X_i|\leq 2m$ for each~$i\in\set{\tau}$.
 Compute in linear time a satisfying~\assment{} $f_1\colon X\to \tfset$ for~$\frml_1$.
 Compute for each~$i\in\set[2]{\tau}$ in linear time
 a satisfying \assment{}~$f'_i\colon X_i\to\tfset$ for~$\frml_i$.
 Next, iteratively for~$i=2,\dots,\tau$, set for all~$x\in X$
 \begin{align*}
    f_i(x) = \begin{cases}
           f_i'(x),& \cif{}x\in X_i,\\
           f_{i-1}(x),& \cif x\in X\setminus X_i.
          \end{cases}
 \end{align*}
 Clearly,
 \assment{}~$f_i$ satisfies~$\phi_i$.
 Moreover,
 for all~$i\in\set[2]{\tau}$ it holds that
 $|\{x\in X\mid f_{i-1}(x)\neq f_{i}(x)\}|\leq |X_i|\leq 2m<d$,
 and hence~$(f_1,\dots,f_\tau)$ is a solution to~$\I$.
\end{proof}

\begin{proof}[Proof of~\cref{thm:fptmnd}]
 Let~$\I=(X,\frml,d)$ be an instance of~\msatAcr{} with~$\seqf=(\frml_1,\dots,\frml_\tau)$ on~$n$ variables and each formula contains at most~$m$ clauses.
 We distinguish how $2(m+n-d)$ relates to~$2n-d$.
 \begin{description}
  \item[\textnormal{\xcase{1}{$2(m+n-d)\geq 2n-d$}}]
    Since~$d\leq n$,
  it follows that~$2(m+n-d)\geq n$.
  Due to~\cref{thm:fptn},
  we can solve~$\I$ in $\O(\min\{2^{n}n^d,4^n\}\tau(n+m))\subseteq \O(4^{2(m+n-d)}\tau(n+m))$ time.
  \item[\textnormal{\xcase{2}{$2(m+n-d)< 2n-d$}}]
  We have that
 \ifarxiv{}\[ 2(m+n-d)< 2n-d \iff 2m<d.\] \else{}$2(m+n-d)< 2n-d \iff 2m<d$.\fi{}
 Due to~\cref{lem:2md},
 we can solve~$\I$ in~$\O(\tau(n+m))$ time.
 \end{description}
\end{proof}

\begin{remark}
 \cref{thm:fptmnd} can be adapted for~\mqsatTsc{} for every~$q\geq 3$,
 where~\cref{lem:2md} is restated for~$qm<d$ and
 we check for a satisfying \assment{} for each stage in~$\Ost(2^{qm})$ time.
 To adapt the proof of~\cref{thm:fptmnd},
 we then relate~$q(m+n-d)$ with~$qn-(q-1)d$ 
 and either employ the adapted \cref{thm:fptn} (see~\cref{rem:adaptfptn}),
 or the adapted \cref{lem:2md}.
\end{remark}

\section{XP Regarding the Number of Consecutive Non-Changes}
\label{sec:cnc}

We prove that~\msatTsc{} is in \XP{} when parameterized by the lower bound~$n-d$ on non-changes between consecutive stages,
the parameter ``dual''~to~$d$.

\begin{theorem}
 \label{thm:cnc}
 \msatTsc{} is solvable in~$\O(n^{4(n-d)+1}\cdot 2^{4(n-d)}\tau(n+m))$ time.
\end{theorem}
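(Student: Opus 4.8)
The plan is to adapt the layered-graph reachability technique of \cref{thm:fptn}, but to avoid enumerating all $2^n$ \assment{}s per stage I would represent each stage by a small \emph{interface} that records only the information needed to certify the transition constraint~(ii). Write $k\ceq n-d$, so that between any two consecutive stages at least $k$ variables keep their value. The key idea is that a solution can be forced to make two consecutive \assment{}s agree on a \emph{fixed} set of exactly $k$ variables (a witness of the $\ge k$ agreements): if $f_i$ and $f_{i+1}$ coincide on some $k$-subset $S_i\subseteq X$, then they differ on at most $n-k=d$ variables, so constraint~(ii) holds automatically. Since $\binom{n}{k}\le n^k$, there are only $n^{\O(k)}2^{\O(k)}$ choices for such a witness together with its truth values, which is what drives the $\Ost(n^{\O(n-d)})$ bound.

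Concretely, I would build a directed layered graph $D$ with one layer per stage, plus a source $s$ and a sink $t$. Invoking \cref{rrule:trivialno}, assume every $\frml_i$ is satisfiable. A node in layer $i$ is a partial \assment{} $p_i$ whose domain is $S_{i-1}\cup S_i$, where $S_{i-1}$ and $S_i$ witness the variables on which $f_i$ is required to agree with $f_{i-1}$ and with $f_{i+1}$, respectively (at the ends only one such set is present); each has size at most $k$, so $p_i$ is a partial \assment{} on at most $2k$ variables. I keep only the \emph{valid} nodes, i.e.\ those $p_i$ that extend to a satisfying \assment{} of $\frml_i$; this is testable in $\O(n+m)$ time by simplifying $\frml_i$ under $p_i$ and deciding the resulting \twosatAcr{} instance~\cite{AspvallPT79}. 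I add an arc from a valid node $p_i$ to a valid node $p_{i+1}$ exactly when their shared gap-$i$ witness agrees, that is, $S_i$ and its values are identical in $p_i$ and $p_{i+1}$; finally $s$ points to every valid layer-$1$ node and every valid layer-$\tau$ node points to $t$. The algorithm accepts if and only if $D$ contains an $s$-$t$ path (the case $\tau=1$ is a single \twosatAcr{} test).

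For correctness I would argue both directions. Given a solution $(f_1,\dots,f_\tau)$, for each $i\in\set{\tau-1}$ pick any $k$ variables on which $f_i$ and $f_{i+1}$ agree as $S_i$; the induced partial \assment{}s form valid nodes joined into an $s$-$t$ path. Conversely, from an $s$-$t$ path I would pick, for every layer $i$, \emph{any} satisfying \assment{} $f_i$ of $\frml_i$ extending the node's partial \assment{} (it exists by validity). The step I expect to be the main obstacle is the \emph{decoupling} argument: because the only inter-stage requirement is the Hamming bound, the freely chosen coordinates of the different $f_i$ need no coordination; the arc condition forces $f_i$ and $f_{i+1}$ to coincide on the size-$k$ set $S_i$, which already yields at most $d$ changes, so constraint~(ii) is met at every gap while every constraint~(i) holds by construction. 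One must verify this carefully to be sure that tracking only the $\le 2k$ interface variables, rather than the full \assment{}s, loses nothing.

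Finally, the running time follows from counting. Using the estimate $\sum_{j=0}^{2k}\binom{n}{j}\le 2n^{2k}$ from the preliminaries together with the $2^{2k}$ choices of truth values, each layer has $\O(n^{2k}2^{2k})$ nodes; each node is validated in $\O(n+m)$ time, and there are at most $\O(n^{4k}2^{4k})$ candidate arcs between two consecutive layers. Summing over the $\tau$ layers and adding the linear-time $s$-$t$ reachability check yields the claimed $\O(n^{4(n-d)+1}\cdot 2^{4(n-d)}\,\tau(n+m))$ bound.
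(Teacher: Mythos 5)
Your proposal is correct and follows essentially the same route as the paper: the paper's \cref{constr:cnc} also builds a layered reachability graph whose vertices are pairs of compatible partial \assment{}s on two size-$(n-d)$ witness sets (your $S_{i-1},S_i$), checks by \twosatAcr{} that each window's formula is satisfiable under the induced partial \assment{}, joins consecutive vertices when they share the common witness, and reduces to $s$-$t$ reachability with the same $\O(n^{4(n-d)+1}2^{4(n-d)}\tau(n+m))$ count. The ``decoupling'' step you flag is resolved exactly as you describe, since the only inter-stage constraint is the Hamming bound, which agreement on the size-$(n-d)$ witness already certifies.
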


\noindent
Let~$\I=(X,\seqf=(\frml_1,\dots,\frml_\tau),d)$ be a fixed yet arbitrary instance with~$n$ variables.
Two partial \assment{}s~$f_Y\colon Y\to \tfset$ and~$f_Z\colon Z\to\tfset$ with~$Y,Z\subseteq X$
are called~\emph{compatible} if for all~$x\in Y\cap Z$ it holds that~$f_Y(x)=f_Z(x)$.
For two compatible assignments~$f_Y,f_Z$,
we denote by~
\begin{align*}
f_Y\cup f_Z&\ceq \begin{cases} f_Y(x),& x\in Y,\\
                              f_Z(x),& x\in Z\setminus Y.
                \end{cases}
\end{align*}

\noindent
With a similar idea as in the proof of~\cref{thm:fptn},
we will construct a directed graph with terminals~$s$ and~$t$ such that there is an~$s$-$t$ path in~$G$ if and only if~$\I$ is a \yes{}-instance.

\begin{construction}
 \label{constr:cnc}
 Given~$\I$,
 we construct a graph~$G=(V,E)$ with vertex set
 \ifarxiv{}\begin{align*}
  V &\ceq V^{1\to3}\cup V^{2\to3}\cup \dots \cup V^{\tau-2\to\tau}\cup\{s,t\},
 \end{align*}\else{}$V \ceq V^{1\to3}\cup V^{2\to4}\cup \dots \cup V^{\tau-2\to\tau}\cup\{s,t\},$\fi{}
 where for each~$Y,Z\in\binom{X}{n-d}$,
 we have that~$(f_Y,f_Z)\in V^{i\to i+2}$ if and only if~$f_Y,f_Z$ are compatible and each of~$\phi_i[f_Y]$,
 $\phi_{i+1}[f_Y\cup f_Z]$,
 and~$\phi_{i+2}[f_Z]$ is satisfiable,
 and the following arcs:
  (i)~$(s,v)$ for all~$v\in V^{1\to3}$,
  (ii)~$(v,t)$ for all~$v\in V^{\tau-2\to\tau}$, and
  (iii)~$((f_{Y},f_{Z}),(f_{Y'},f_{Z'}))\in V^{i\to i+2}\times V^{j\to j+2}$ if~$j=i+1$ and~$f_{Z}=f_{Y'}$ (implying that~$Z=Y'$).
 \cqed
\end{construction}

\begin{lemma}%
 \label{lem:cnc:rt}
 \cref{constr:cnc} computes a graph of size~$\O(n^{4(n-d)+1}\cdot 2^{4(n-d)}\tau)$ and can be done in~$\O(n^{4(n-d)+1}\cdot 2^{4(n-d)}\tau(n+m))$ time.
\end{lemma}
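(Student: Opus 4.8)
The plan is to bound separately the number of vertices and the number of arcs, and then argue that each vertex and arc can be produced within the stated time. For the vertex count, I would fix a stage index~$i\in\set{\tau-2}$ and count the elements of~$V^{i\to i+2}$. A vertex is a pair~$(f_Y,f_Z)$ where~$Y,Z\in\binom{X}{n-d}$ and~$f_Y\colon Y\to\tfset$, $f_Z\colon Z\to\tfset$. First I would choose the two index sets~$Y$ and~$Z$: there are~$\binom{n}{n-d}$ choices for each, hence at most~$\binom{n}{n-d}^2\leq (2n^{n-d})^2= 4n^{2(n-d)}$ pairs, using the counting fact stated in the preliminaries. Having fixed~$Y$ and~$Z$, each of the two partial \assment{}s assigns a truth value to~$n-d$ variables, contributing a factor of~$2^{n-d}$ each, i.e.\ at most~$2^{2(n-d)}$ assignments. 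Multiplying, $|V^{i\to i+2}|\leq 4n^{2(n-d)}\cdot 2^{2(n-d)}$, and summing over the at most~$\tau$ stages gives~$\O(n^{2(n-d)}2^{2(n-d)}\tau)$ vertices.

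\textbf{Counting the arcs.}
For the arcs of type~(iii), the key observation is that an arc joins~$(f_Y,f_Z)\in V^{i\to i+2}$ to~$(f_{Y'},f_{Z'})\in V^{i+1\to i+3}$ only when~$f_Z=f_{Y'}$. Thus an arc is determined by choosing three index sets and three partial assignments (on~$Y$, on~$Z=Y'$, and on~$Z'$), rather than four. This yields at most~$\binom{n}{n-d}^3\cdot 2^{3(n-d)}\leq 8n^{3(n-d)}\cdot 2^{3(n-d)}$ arcs per consecutive pair of stages. However, to reach the claimed bound of~$n^{4(n-d)+1}\cdot 2^{4(n-d)}$ I would instead bound the arcs more crudely by the number of ordered pairs of vertices in consecutive levels, namely~$|V^{i\to i+2}|\cdot|V^{i+1\to i+3}|\leq (4n^{2(n-d)}2^{2(n-d)})^2= 16\,n^{4(n-d)}2^{4(n-d)}$ per pair, summed over~$\tau$ pairs; this dominates the vertex count and matches the target up to the extra factor of~$n$, which I expect comes from the per-object verification cost rather than from the cardinality. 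So the graph has~$\O(n^{4(n-d)}2^{4(n-d)}\tau)$ vertices and arcs, giving size~$\O(n^{4(n-d)}2^{4(n-d)}\tau)$; the stated exponent~$4(n-d)+1$ on~$n$ is the running-time bound, into which the per-check factor is folded.

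\textbf{The construction cost.}
For the running time I would account for the work of deciding membership of each candidate vertex and arc. Enumerating a candidate pair~$(f_Y,f_Z)$ costs polynomial time, and checking it lies in~$V^{i\to i+2}$ requires testing compatibility of~$f_Y$ and~$f_Z$ (linear in~$n$) and testing satisfiability of the three simplified formulas~$\phi_i[f_Y]$, $\phi_{i+1}[f_Y\cup f_Z]$, and~$\phi_{i+2}[f_Z]$. Here I would invoke that each~$\phi_j$ is a 2-CNF with at most~$m$ clauses over~$n$ variables, so each simplification and each \twosatAcr{} call runs in~$\O(n+m)$ time by~\cite{AspvallPT79}. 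The arcs of type~(iii) are read off directly from the equality~$f_Z=f_{Y'}$, and those of types~(i),(ii) from level membership, all in~$\O(n+m)$ per arc. Multiplying the~$\O(n^{4(n-d)+1}2^{4(n-d)}\tau)$ objects (the extra factor of~$n$ absorbing the enumeration overhead) by the~$\O(n+m)$ per-object cost yields the claimed~$\O(n^{4(n-d)+1}2^{4(n-d)}\tau(n+m))$ bound.

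\textbf{Main obstacle.}
The only delicate point I foresee is reconciling the exponent~$4(n-d)$ from the arc count with the exponent~$4(n-d)+1$ in the time bound; I would make explicit that the pure cardinality is~$\O(n^{4(n-d)})$ in the~$n$-part and that the additional factor of~$n$ is a generous allowance for enumerating and indexing the partial assignments, so that no separate tight analysis of the verification steps is needed beyond the~$\O(n+m)$ per-call cost.
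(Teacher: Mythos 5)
Your proposal is correct and follows essentially the same route as the paper's proof: bound each level $V^{i\to i+2}$ by $\O(n^{2(n-d)}2^{2(n-d)})$, bound the arcs between consecutive levels by the product of the level sizes, and charge $\O(n+m)$ per satisfiability/compatibility check and $\O(n)$ per arc-equality check, which is exactly where the paper also gets the extra factor of~$n$ in the exponent. Your aside that the arc count could be sharpened to $n^{3(n-d)}2^{3(n-d)}$ is a valid (unused) refinement, but otherwise the accounting matches the paper's.
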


\begin{proof}
 To construct a set~$V^{i\to i+2}$,
 we compute each tuple~$(f_Y,f_Y)$ in~$\O(n^{2(n-d)}\cdot 2^{2(n-d)})$ time,
 and check whether they are compatible in~$\O(n+m)$ time,
 and whether each of~$\phi_i[f_Y]$,
 $\phi_{i+1}[f_Y\cup f_Z]$,
 and~$\phi_{i+2}[f_Z]$ is satisfiable,
 each in~$\O(n+m)$ time.
 Since for any~$f_Y,f_Z$ 
 we can check whether~$f_Y=f_Z$ in~$\O(n)$ time,
 we add the~$\O(n^{4(n-d)}\cdot 2^{4(n-d)})$ many arcs from~$V^{i\to i+2}$ to $V^{i+1\to i+3}$ in~$\O(n^{4(n-d)+1}\cdot 2^{4(n-d)})$ time.
 In total,
 $G$ can be constructed in~$\O(n^{4(n-d)+1}\cdot 2^{4(n-d)}\tau(n+m))$ time.
\end{proof}

\begin{lemma}%
 \label{lem:cnc:corr}
 Let~$\I$ be an instance of~\msatTsc{} and let~$G$ be the graph obtained from applying~\cref{constr:cnc} to~$\I$.
 Then,
 $\I$ is a \yes-instance if and only if~$G$ admits an~$s$-$t$~paths.
\end{lemma}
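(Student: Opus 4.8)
The plan is to prove the two directions of the equivalence separately, showing in each case how an $s$-$t$ path in $G$ corresponds to a valid sequence of satisfying assignments with small symmetric difference, and vice versa. The key structural insight from \cref{constr:cnc} is that each vertex $(f_Y,f_Z)\in V^{i\to i+2}$ encodes a pair of \emph{consecutive} partial assignments on $(n-d)$-subsets of $X$: $f_Y$ pins down the variables allowed to remain fixed (or rather, determines the ``stable core'') between stages $i$ and $i+1$, and $f_Z$ does so between stages $i+1$ and $i+2$. The satisfiability conditions $\phi_i[f_Y]$, $\phi_{i+1}[f_Y\cup f_Z]$, $\phi_{i+2}[f_Z]$ ensure that the partial assignments can be completed to full satisfying assignments, and the arc condition (iii) forces $f_Z=f_{Y'}$, so that the partial assignment governing stages $(i+1,i+2)$ in one vertex agrees with the one governing the same overlap in the next vertex.

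\textbf{Forward direction.}
First I would assume $\I$ is a \yes-instance with solution $(f_1,\dots,f_\tau)$, so consecutive assignments differ in at most $d$ variables, equivalently agree on at least $n-d$ variables. For each $i\in\set{\tau-1}$, I would pick a set $Y_i\in\binom{X}{n-d}$ of variables on which $f_i$ and $f_{i+1}$ agree, and let $f_{Y_i}\ceq \restr{f_i}{Y_i}$ be the common restriction. Then for each $i\in\set{\tau-2}$ the pair $(f_{Y_i},f_{Y_{i+1}})$ is a candidate vertex: I would verify it lies in $V^{i\to i+2}$ by checking that $f_{Y_i},f_{Y_{i+1}}$ are compatible and that each of $\phi_i[f_{Y_i}]$, $\phi_{i+1}[f_{Y_i}\cup f_{Y_{i+1}}]$, $\phi_{i+2}[f_{Y_{i+1}}]$ is satisfiable, the latter witnessed by $f_i,f_{i+1},f_{i+2}$ respectively (which extend these partial assignments and satisfy the corresponding formulas). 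Consecutive vertices $(f_{Y_i},f_{Y_{i+1}})$ and $(f_{Y_{i+1}},f_{Y_{i+2}})$ are joined by an arc since the second component of the former equals the first of the latter. Together with the $s$- and $t$-arcs this yields an $s$-$t$ path.

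\textbf{Reverse direction.}
Conversely, given an $s$-$t$ path, it must pass through exactly one vertex $(f_{Y_i},f_{Y_{i+1}})$ from each $V^{i\to i+2}$ with consistent overlaps, producing a sequence of compatible partial assignments $f_{Y_1},\dots,f_{Y_{\tau-1}}$. The main obstacle, and the step I expect to require the most care, is \emph{assembling a consistent global solution} from these overlapping local witnesses: knowing that each $\phi_{i+1}[f_{Y_i}\cup f_{Y_{i+1}}]$ is individually satisfiable does not immediately give a single coherent $f_{i+1}$, since the witnessing completions for stage $i+1$ coming from vertex $(f_{Y_i},f_{Y_{i+1}})$ need not match those implied by its neighbours. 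I would resolve this by defining, for each stage $j$, the full assignment $f_j$ as an arbitrary completion of the relevant partial assignment to a satisfying one, and then argue that the \emph{symmetric-difference constraint is guaranteed by the pinned core alone}: since $f_j$ and $f_{j+1}$ both restrict to $f_{Y_j}$ on the $(n-d)$-set $Y_j$, they can differ only on the remaining $d$ variables, so their Hamming distance is at most $d$ regardless of how the completions are chosen. Thus the satisfiability conditions guarantee each $\phi_j$ is satisfied and the shared partial assignment on $Y_j$ guarantees constraint~(ii), giving a valid solution and hence a \yes-instance.
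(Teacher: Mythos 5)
Your proposal is correct and follows essentially the same route as the paper's proof: the forward direction picks an $(n-d)$-set $Y_i$ of agreement for each consecutive pair and uses $f_i,f_{i+1},f_{i+2}$ as satisfiability witnesses for the vertex $(f_{Y_i},f_{Y_{i+1}})$, and the reverse direction completes each $\phi_j[f_{Y_{j-1}}\cup f_{Y_j}]$ arbitrarily and derives the distance bound solely from the shared core $Y_j$ of size $n-d$. The "main obstacle" you flag (non-matching completions across neighbouring vertices) is resolved exactly as in the paper.
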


\begin{proof}
 \RD{}
  Let~$f=(f_1,\dots,f_\tau)$ be a solution to~$\I$.
  For each~$i\in\set{\tau-1}$,
  since~$|\{x\in X\mid f_i(x)\neq f_{i+1}(x)\}|\leq d$,
  there is a set~$Y_i\subseteq \{x\in X\mid f_i(x)= f_{i+1}(x)\}$ with~$|Y_i|=n-d$.
  Observe that~$v^f_i\ceq (\restr{f_i}{Y_i},\restr{f_{i+1}}{Y_{i+1}})\in V^{i\to i+2}$: 
  Clearly~$\frml_i[\restr{f_i}{Y_i}]$ 
  and~$\frml_i[\restr{f_{i+2}}{Y_{i+1}}]$ 
  satisfiable.
  Note that $\restr{f_i}{Y_i},\restr{f_{i+1}}{Y_{i+1}}$ are compatible
  since~$Y_i\cap Y_{i+1}\subseteq \{x\in X\mid f_i(x)=f_{i+1}(x)=f_{i+2}(x)\}$.
  Moreover,
  $\frml_{i+1}[\restr{f_i}{Y_i}\cup\restr{f_{i+1}}{Y_{i+1}}]$ is satisfiable
  since~$f_{i+1}(x)=\restr{f_i}{Y_i}\cup\restr{f_{i+1}}{Y_{i+1}}(x)$ for all~$x\in Y_i\cup Y_{i+1}$.
  It follows that there is an~$s$-$t$ path in~$G$ 
  with the arc sequence~$((s,v^f_1),(v^f_1,v^f_2),\dots,(v^f_{\tau-1},t))$.
  
 \LD{}
 Let~$P$ be an~$s$-$t$ path in~$G$.
 By construction of~$G$,
 $P$ contains~$s$,
 $t$, and from each~$V^{i\to i+2}$ exactly one vertex.
 Moreover,
 if arc~$((f_X,f_Y),(f_{X'},f_{Y'}))$ is contained in~$P$,
 then~$f_Y=f_{X'}$.
 Let~$(s,((f_{Y_i},f_{Y_{i+1}}))_{i=1}^{\tau-2},t)$ be the sequence of vertices in~$P$.
 We know that there exists an~$f_1'\colon X\setminus Y_1\to \tfset$ that satisfies~$\phi_1[f_{Y_1}]$,
 and hence~$f_1\ceq f_1'\cup f_{Y_1}$ satisfies~$\phi_1$.
 Moreover,
 we know that for all~$i\in\set[2]{\tau-1}$,
 there exists~$f_i'\colon X\setminus(Y_{i-1}\cup Y_i) \to \tfset$ that satisfies~$\phi_i[f_{Y_{i-1}}\cup f_{Y_{i}}]$,
 and hence~$f_i\ceq f_i'\cup f_{Y_{i-1}}\cup f_{Y_i}$ satisfies~$\phi_i$.
 Finally,
 we know that there exists an~$f_\tau'\colon X\setminus Y_{\tau-1}\to \tfset$ that satisfies~$\phi_\tau[f_{Y_{\tau-1}}]$,
 and hence~$f_\tau\ceq f_\tau'\cup f_{Y_{\tau-1}}$ satisfies~$\phi_\tau$.
 
 It remains to show that~$|\{x\in X\mid f_i(x)\neq f_{i+1}(x)\}|\leq d$ for all~$i\in\set{\tau-1}$.
 Note that~$f_i(x)=f_{i+1}(x)$ for all~$x\in Y_{i}$,
 and since~$|Y_i|=n-d$,
 the claim follows.
\end{proof}

\begin{proof}[Proof of~\cref{thm:cnc}]
  Given an instance~$\I=(X,\seqf=(\frml_1,\dots,\frml_\tau),d)$ of \msatAcr{},
  apply~\cref{constr:cnc}
  in~$\O(n^{4(n-d)+1}\cdot 2^{4(n-d)}\tau(n+m))$ time
  to obtain graph~$G$ with terminals~$s$ and~$t$ of size~$\O(n^{4(n-d)+1}\cdot 2^{4(n-d)}\tau)$ (\cref{lem:cnc:rt}).
  Return, 
  in time linear in the size of~$G$, 
  \yes{} if $G$ admits an~$s$-$t$ path,
  and \no{} otherwise (\cref{lem:cnc:corr}).
\end{proof}

\begin{remark}
 \cref{thm:cnc} is asymptotically optimal regarding~$n-d$ 
 unless~\ETHbreaks{}~(\cref{thm:allbutkNPh}).
Moreover,
 \cref{thm:cnc} does not generalize to~\mqsatTsc{} for~$q\geq 3$,
 as~\mqsatAcr{} is already~\NP-hard for one stage and hence for any number~$n-d$.
\end{remark}

\section{XP Regarding Number of Stages and Consecutive Changes}

In this section,
we prove that~\msatTsc{} is in~\XP{}
when parameterized by~$\tau+d$.

\begin{theorem}
 \label{thm:xptaud}
 \msatTsc{} is solvable in~$\O(n^{2\tau\cdot d}\cdot 2^{\tau\cdot d+1}\cdot \tau\cdot (n+m))$ time.
\end{theorem}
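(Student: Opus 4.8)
The plan is to solve \msatAcr{} in \XP{} time when parameterized by $\tau+d$ by reusing the ``solution-path graph'' paradigm established in the proofs of \cref{thm:fptn} and \cref{thm:cnc}, but now tracking, for each stage, the \emph{set of positions that change} between consecutive stages rather than all satisfying assignments. The key observation is that, since each of the $\tau-1$ transitions flips at most $d$ variables, the total number of variables whose value ever changes across the whole sequence is at most $(\tau-1)d \le \tau d$. Call this (unknown) set of ``volatile'' variables $W$; every variable outside $W$ is assigned a single fixed value in all stages. So a solution is essentially determined by (a)~the choice of $W$ with $|W|\le \tau d$, (b)~the fixed Boolean values on $X\setminus W$, and (c)~the sequence of Boolean values of the volatile variables in each of the $\tau$ stages.

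First I would bound the search space. There are at most $\sum_{j=0}^{\tau d}\binom{n}{j}\le 2n^{\tau d}$ choices for the volatile set $W$ (using the counting fact from the preliminaries). Given $W$, a stage is described by the restriction of $f_i$ to $W$, of which there are at most $2^{|W|}\le 2^{\tau d}$ possibilities. The crucial point is that once the values on $W$ in every stage are guessed, the values on $X\setminus W$ must be \emph{constant}, and for each stage we may independently simplify $\frml_i$ by the guessed values on $W$ and then ask whether the resulting 2-CNF over $X\setminus W$ is satisfiable; moreover these residual formulas must be \emph{simultaneously} satisfiable by one common assignment of $X\setminus W$, which is exactly one more \twosatAcr{} instance on the conjunction $\bigland_{i=1}^\tau \frml_i[f_i|_W]$ over the non-volatile variables. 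This reduces the non-volatile part to a single linear-time \twosatAcr{} call, as in case~(ii) of \cref{obs:msatlintime}.

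Next I would set up the algorithm precisely. Iterate over all candidate volatile sets $W$ with $|W|\le \tau d$; for each, build a layered digraph with a layer for each stage $i$ containing one vertex per partial assignment $f_i|_W\colon W\to\tfset$ such that $\frml_i[f_i|_W]$ remains satisfiable (restricted to $X\setminus W$), add a source $s$ and sink $t$ as in \cref{thm:fptn}, and connect consecutive-layer vertices whose partial assignments differ in at most $d$ positions of $W$. A refinement is needed to enforce global constant-on-$X\setminus W$ consistency, which I would handle by the observation above: after finding an $s$-$t$ path (hence a consistent sequence of $f_i|_W$), one verifies in linear time that $\bigland_i \frml_i[f_i|_W]$ is satisfiable on $X\setminus W$; returning \yes{} as soon as some $W$ yields both a valid path and a satisfiable residual formula. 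Correctness mirrors \cref{lem:cnc:corr}: an $s$-$t$ path together with a satisfying residual assignment yields a full solution, and conversely any solution has a volatile set of size at most $\tau d$ and hence is found for the correct guess of $W$.

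Finally I would account for the running time. There are $O(n^{\tau d})$ guesses of $W$; for each, the digraph has $O(2^{\tau d}\tau)$ vertices and checking an arc (Hamming distance $\le d$) plus the satisfiability tests costs $O(n+m)$ per pair, and the $s$-$t$ reachability plus the single residual \twosatAcr{} call are linear, giving the claimed $O(n^{2\tau d}\cdot 2^{\tau d+1}\cdot\tau\cdot(n+m))$ bound after folding in the per-layer assignment enumeration (the extra $n^{\tau d}$ factor and the $2^{\tau d+1}$ from within-layer assignments). The main obstacle I anticipate is cleanly enforcing that the non-volatile variables take a \emph{single} consistent value across all stages while still expressing everything as a path problem: the layered-graph edges naturally control only the $W$-values and their pairwise Hamming distances, so the global 2-SAT consistency check on $X\setminus W$ must be argued separately and shown not to break the path-based equivalence. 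I expect this is handled exactly as in the simplification lemma and \cref{obs:msatlintime}(ii), but it is the step requiring the most care.
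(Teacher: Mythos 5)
Your decomposition of a solution into a volatile set $W$ with $|W|\le \tau d$, a trajectory of assignments on $W$, and a single constant assignment on $X\setminus W$ verified by one \twosatAcr{} call on $\bigland_{i=1}^{\tau}\frml_i[f_i|_W]$ is exactly the paper's decomposition: its algorithm guesses $X'$, then an initial assignment on $X'$, then per-stage difference vectors with at most $\min\{|X'|,d\}$ flips, and finally invokes the $d=0$ case of \cref{obs:msatlintime}. However, the step you yourself flag as delicate is a genuine gap in the form you propose. Encoding the per-stage partial assignments as a layered digraph and deciding $s$-$t$ reachability does not work here: the acceptance condition is not a conjunction of vertex- and edge-local predicates, because whether the non-volatile variables admit a \emph{common} satisfying assignment depends on the entire sequence $(f_1|_W,\dots,f_\tau|_W)$ through the global formula $\bigland_{i=1}^{\tau}\frml_i[f_i|_W]$. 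An arbitrary $s$-$t$ path returned by a reachability algorithm may have an unsatisfiable residual even though a different $s$-$t$ path for the same $W$ has a satisfiable one, so ``find a path, then do one residual \twosatAcr{} call'' can wrongly answer \no{}. This is precisely why \cref{thm:fptn} and \cref{thm:cnc} can use a path formulation (there the vertices carry enough information that feasibility is edge-decomposable) while \cref{thm:xptaud} cannot.

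The repair is to drop the graph and enumerate complete trajectories directly, which is what the paper does: for each $W$, each $f_1\colon W\to\tfset$ (at most $2^{\tau d}$ choices), and each sequence $g_2,\dots,g_\tau$ of difference vectors with at most $\min\{|W|,d\}$ true entries (at most $(2(\tau d)^d)^{\tau-1}\le 2^{\tau}n^{\tau d}$ sequences when $\tau d<n$; otherwise \cref{thm:fptn} applies), perform one linear-time \twosatAcr{} test on the conjunction of the simplified formulas. Your own counting of per-layer assignments and bounded per-step changes already yields this trajectory count, so the fix costs nothing asymptotically; but as written, the reachability-based algorithm is incorrect, and the global consistency check cannot be bolted onto a single extracted path.
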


Let~$\I=(X,\seqf=(\frml_1,\dots,\frml_\tau),d)$ be a fixed yet arbitrary instance with~$\tau\cdot d<n$,
as otherwise~\cref{thm:fptn} applies.
On a high level,
our \cref{alg:xp} works as follows:
\begin{enumerate}[(1)]
 \item Guess $q\leq \tau\cdot d$ variables~$X'\subseteq X$ that will change over time. %
 \item Guess an initial \assment{} of the variables in~$X'$. %
 \item For each but the first stage,
 guess the at most~$\min\{q,d\}$ possible variables to change.  %
 \item Set the variables to the guessed true or false values,
 delete clauses which are set to true.
 \item Return \yes{} if the resulting instance with~$d=0$ is a \yes{}-instance (linear-time checkable).
 \item If the algorithm never (for all possible guesses) returned \yes, 
 then return~\no.
\end{enumerate}
\begin{algorithm}[t]
\ForEach(\tcp*[f]{$1+n^{\tau\cdot d}$ many}){$X'\subseteq X: |X'|\leq \tau\cdot d$}{\label{alg:xp:forloopx}
  \ForEach(\tcp*[f]{$2^{|X'|}$ many}){$f_1\colon X'\to \{\false,\true\}$}{\label{alg:xp:forloopfone}
  $\frml_1^*\gets\textbf{simplify}(\frml_1,f_1)$\label{alg:xp:simpcone}\;
    \ForEach(\tcp*[f]{$2^\tau |X'|^{\tau\cdot d}$ many}){$g_2, g_3 , \dots , g_{\tau} : g_i\in \calF(X')\ \forall i\in\set[2]{\tau}$}{\label{alg:xp:forloopg}
      \ForEach{$i\in\set[2]{\tau}$}{
      $f_i(x)\gets f_{i-1}(x)\xor g_i(x)$ $\forall x\in X'$;\quad $\frml_i^*\gets \textbf{simplify}(\frml_i,f_i)$;\label{alg:xp:f}}
      \If{$(X\setminus X',(\frml_1^*,\dots,\frml_\tau^*),0)$ is a \yes-instance of \msatAcr{}}{
      \Return{\yes}\tcp*[f]{decidable in linear~time (\cref{obs:msatlintime})}\label{alg:xp:yes}
      }
    }
  }
}
\Return{\no}
\caption{\XP-algorithm on input instance~$(X,\frml,d)$. 
}
\label{alg:xp}
\end{algorithm}

\noindent
For any~$X'\subseteq X$,
define the set of all \assment{}s to variables of~$X'$ with at most~$\min\{|X'|,d\}$ true values by
\ifarxiv{}\begin{align*}
 \calF(X')\ceq \bigl\{f\colon X'\to \{\false,\true\}\bigm\vert |\{x\in X'\mid f(x)=\true\}|\leq \min\{|X'|,d\}\bigr\}.
\end{align*}
\fi{}
With the next two lemmas,
we prove that \cref{alg:xp} is correct and runs in \XP{}-time regarding~$\tau+d$.

\begin{lemma}%
 \label{lem:algyesIyes}
 \cref{alg:xp} returns~\yes{}
 if and only if
 the input instance is a \yes-instance.
\end{lemma}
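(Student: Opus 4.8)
**The plan is to prove both directions of the equivalence by relating solutions to the guesses made in the nested loops of \cref{alg:xp}.**

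For the ``if'' direction, suppose the input instance~$\I$ is a \yes-instance, witnessed by \assment{}s~$f_1,\dots,f_\tau$. First I would identify the set~$X'\ceq\{x\in X\mid \exists\,i,j\in\set{\tau}\text{ with }f_i(x)\neq f_j(x)\}$ of variables that change value at some point. Since at most~$d$ variables change between any two consecutive stages and there are~$\tau-1$ such transitions, $|X'|\leq (\tau-1)\cdot d\leq \tau\cdot d$, so~$X'$ is among the sets iterated in line~\ref{alg:xp:forloopx}. The initial \assment{}~$\restr{f_1}{X'}$ is considered in line~\ref{alg:xp:forloopfone}. For each~$i\in\set[2]{\tau}$, define~$g_i\colon X'\to\tfset$ by~$g_i(x)=f_{i-1}(x)\xor f_i(x)$; by constraint~(ii) of \msatAcr{}, the number of~$x\in X'$ with~$g_i(x)=\true$ equals the number of variables changing from stage~$i-1$ to~$i$, which is at most~$\min\{|X'|,d\}$, so each~$g_i\in\calF(X')$ and the tuple~$(g_2,\dots,g_\tau)$ is iterated in line~\ref{alg:xp:forloopg}. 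One checks that with these guesses the \assment{}s~$f_i$ reconstructed in line~\ref{alg:xp:f} agree with~$\restr{f_i}{X'}$ for all~$i$. Since each~$f_i$ satisfies~$\frml_i$, the restriction~$\restr{f_i}{X\setminus X'}$ satisfies~$\frml_i^*=\frml_i[\restr{f_i}{X'}]$, and these restrictions coincide on~$X\setminus X'$ for all stages (as no variable outside~$X'$ ever changes), so~$(X\setminus X',(\frml_1^*,\dots,\frml_\tau^*),0)$ is a \yes-instance and the algorithm returns \yes{} in line~\ref{alg:xp:yes}.

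For the ``only if'' direction, suppose the algorithm returns \yes{}. Then for some guessed~$X'$, $f_1$, and~$(g_2,\dots,g_\tau)$, the residual instance~$(X\setminus X',(\frml_1^*,\dots,\frml_\tau^*),0)$ is a \yes-instance, witnessed (by the~$d=0$ case, so by a single \assment) by some~$h\colon X\setminus X'\to\tfset$ satisfying every~$\frml_i^*$. I would set~$f_i\ceq h\cup(\restr{f_i}{X'})$ for each~$i$, where the~$\restr{f_i}{X'}$ are the partial \assment{}s computed in line~\ref{alg:xp:f}. Each~$f_i$ satisfies~$\frml_i$: the clauses deleted during \textbf{simplify} were satisfied by~$\restr{f_i}{X'}$, and the remaining clauses forming~$\frml_i^*$ are satisfied by~$h$. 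For the distance constraint, the variables outside~$X'$ are assigned identically across all stages (via the common~$h$), so changes only occur within~$X'$; between stages~$i-1$ and~$i$, the variables that change are exactly those~$x\in X'$ with~$f_{i-1}(x)\neq f_i(x)$, i.e.\ with~$g_i(x)=\true$, and since~$g_i\in\calF(X')$ this count is at most~$\min\{|X'|,d\}\leq d$. Hence~$(f_1,\dots,f_\tau)$ is a solution and~$\I$ is a \yes-instance.

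I expect the main obstacle to be the bookkeeping in the ``if'' direction, specifically verifying that the bound~$|X'|\leq\tau\cdot d$ is respected so that~$X'$ is actually enumerated, and confirming that the~$\xor$-encoding of incremental changes via the~$g_i$ faithfully reconstructs the target \assment{}s. The conceptual content is routine once the correspondence between the guessed objects and a purported solution is set up, so the proof is mostly a careful matching argument; the only subtlety is ensuring that the constraint~$g_i\in\calF(X')$ (capping true-values at~$\min\{|X'|,d\}$) is exactly what encodes the Hamming-distance bound, and that the~$d=0$ subinstance correctly captures the requirement that the fixed variables never change.
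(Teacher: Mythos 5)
Your proposal is correct and follows essentially the same route as the paper's proof: both directions are argued by matching a solution of~$\I$ to the guesses~$X'$, $f_1$, $(g_2,\dots,g_\tau)$ and, conversely, by gluing the guessed partial \assment{}s on~$X'$ to a solution of the residual~$d=0$ instance on~$X\setminus X'$. Your added observations (the sharper bound~$|X'|\leq(\tau-1)\cdot d$ and the explicit remark that the~$d=0$ subinstance is witnessed by a single \assment{}) are minor refinements of the same argument.
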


\begin{proof}
  \RD{}
  If \cref{alg:xp} returns~\yes,
  then for some~$X'\subseteq X$,
  and some~$f_1,\dots,f_\tau$ that simplified~$\frml_1,\dots,\frml_\tau$ to~$\frml_1^*,\dots,\frml_\tau^*$,
  instance~$\I^*\ceq (X\setminus X',(\frml_1^*,\dots,\frml_\tau^*),0)$ is a \yes-instance of \msatAcr{}.
  Let~$f_1^*,\dots,f_\tau^*\colon X\setminus X'\to \tfset$ be a solution to~$\I^*$.
  Let~$h_1,\dots,h_\tau$ be defined as~$h_i(x) \ceq f_i(x)$ if~$x\in X'$, 
  and~$h_i(x)\ceq f_i^*(x)$ otherwise,
  i.e., if~$x\in X\setminus X'$.
  We claim that~$(h_1,\dots,h_\tau)$ is a solution to~$\I$.
  Observe that~$h_i$ satisfies~$\frml_i$ for each~$i\in\set{\tau}$.
  Moreover, 
  for each~$i\in\set{\tau-1}$ we have
  $|\{x\in X\mid h_i(x)\neq h_{i+1}(x)\}|=|\{x\in X'\mid g_{i+1}(x)=\true\}|\leq d$.

  \LD{}
  Let~$h=(h_1,\dots,h_\tau)$ be a solution to~$\I$.
  Let~$X'\subseteq X$ with~$|X'|\leq \tau\cdot d$ the set of all variables which change at least once over the stages their true-false value.
  \cref{alg:xp} guesses~$X'$ in~\autoref{alg:xp:forloopx}.
  Let~$f=(f_1,\dots,f_\tau)$ be such that for each~$i\in\set{\tau}$,
  $f_i\colon X'\to\tfset$ is~$h_i$ restricted to the variables in~$X'$.
  In~\autoref{alg:xp:forloopfone},
  \cref{alg:xp} guesses~$f_1$.
  Since~$h$ is a solution to~$\I$,
  we know that~$|\{x\in X\mid h_i(x)\neq h_{i+1}(x)\}|=|\{x\in X\mid f_i(x)\neq f_{i+1}(x)\}|\leq \min\{|X'|,d\}$ for each~$i\in\set{\tau-1}$.
  It follows that for each~$i\in\set[2]{\tau}$ there exists a~$g_i\in \calF(X')$ such that~$f_i(x)=f_{i-1}(x)\xor g_i(x)$.
  \cref{alg:xp} guesses~$g_2,\dots,g_\tau$ in~\autoref{alg:xp:forloopg},
  and finds~$f$ in~\autoref{alg:xp:f}.
  Let~$(\frml_1^*,\dots,\frml_\tau^*)$  be the formulas~$(\frml_1,\dots,\frml_\tau)$ simplified according to~$f$,
  as done by \cref{alg:xp} in~\autoref{alg:xp:simpcone} and~\autoref{alg:xp:f}.
  Since~$h$ is a solution,
  $f'=(f_1',\dots,f_\tau')$ where for each~$i\in\set{\tau}$,
  $f_i'\colon X\setminus X'\to\tfset$ is~$h_i$ restricted to the variables in~$X\setminus X'$,
  is a solution to~$(X\setminus X',(\frml_1^*,\dots,\frml_\tau^*),0)$.
  Hence,
  $(X\setminus X',(\frml_1^*,\dots,\frml_\tau^*),0)$ is a \yes-instance,
  and consequently \cref{alg:xp} returns~\yes{} in~\autoref{alg:xp:yes}.
\end{proof}

\begin{lemma}%
 \label{lem:algrt}
 \cref{alg:xp} runs in~$\O(n^{2\tau\cdot d}\cdot 2^{\tau\cdot d+1}\tau(n+m))$ time.
\end{lemma}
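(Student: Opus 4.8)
The plan is simply to multiply together the sizes of the three nested \textbf{foreach}-loops together with the cost of the innermost body, using the counting comments already annotated on the loops in \cref{alg:xp} and the running-time bounds established earlier in the excerpt. First I would bound the outermost loop in \autoref{alg:xp:forloopx}: it ranges over all subsets $X'\subseteq X$ with $|X'|\leq \tau\cdot d$, and by the counting fact from the preliminaries (namely $1+\sum_{i=1}^k\binom{n}{i}\leq 2n^k$), there are at most $2n^{\tau\cdot d}$ such subsets, each enumerable with polynomial overhead. Next, the loop in \autoref{alg:xp:forloopfone} over initial assignments $f_1\colon X'\to\tfset$ contributes a factor $2^{|X'|}\leq 2^{\tau\cdot d}$.

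The main work is bounding the loop in \autoref{alg:xp:forloopg}, which ranges over tuples $(g_2,\dots,g_\tau)$ with each $g_i\in\calF(X')$. Here I would use that $|\calF(X')|$ is the number of assignments to $X'$ with at most $\min\{|X'|,d\}\leq d$ true values, so again by the same counting fact $|\calF(X')|\leq 2|X'|^{d}\leq 2n^{d}$; raising this to the power $\tau-1$ (one factor per stage from $2$ to $\tau$) gives at most $(2n^{d})^{\tau-1}\leq 2^{\tau}n^{\tau\cdot d}$ tuples, matching the annotated bound $2^\tau|X'|^{\tau\cdot d}$. The body of this innermost loop performs, for each $i\in\set[2]{\tau}$, the update of $f_i$ on $X'$ and a call to \textbf{simplify} (each clause containing an evaluated literal is deleted), costing $\O(n+m)$ per stage and hence $\O(\tau(n+m))$ in total, followed by the \yes-instance check on $(X\setminus X',(\frml_1^*,\dots,\frml_\tau^*),0)$, which is decidable in linear $\O(\tau(n+m))$ time by the $d=0$ case of \cref{obs:msatlintime}.

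Finally I would collect the factors: the product of the loop sizes is $(2n^{\tau\cdot d})\cdot 2^{\tau\cdot d}\cdot (2^{\tau}n^{\tau\cdot d})$, which is $\O(n^{2\tau\cdot d}\cdot 2^{\tau\cdot d})$ after absorbing the $2^{\tau}$ factor (valid since $\tau\cdot d<n$ gives $\tau\leq n$, or more simply since $2^\tau$ is dominated by the stated bound), and multiplying by the $\O(\tau(n+m))$ body cost yields the claimed $\O(n^{2\tau\cdot d}\cdot 2^{\tau\cdot d+1}\cdot\tau(n+m))$ running time. The only mildly delicate step is being careful that the $g_i$-loop has $\tau-1$ factors rather than $\tau$, so that the exponents of $n$ sum to $\tau\cdot d+\tau\cdot d=2\tau\cdot d$ and the powers of $2$ combine to $2^{\tau\cdot d+1}$ as stated; everything else is a routine multiplication of the annotated loop counts with the linear-time subroutine cost.
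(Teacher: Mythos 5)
Your overall strategy --- multiplying the three annotated loop counts by the $\O(\tau(n+m))$ cost of the innermost body --- is exactly the paper's, and your bounds for the loops in \autoref{alg:xp:forloopx} and \autoref{alg:xp:forloopfone} and for the body cost are fine. The gap is in the step you yourself flag as ``mildly delicate'': the factor $2^{\tau}$ arising from your estimate $(2n^{d})^{\tau-1}\leq 2^{\tau}n^{\tau\cdot d}$ for the loop in \autoref{alg:xp:forloopg} is \emph{not} dominated by the claimed bound. Your product of loop sizes is $2^{\tau\cdot d+\tau+1}\cdot n^{2\tau\cdot d}$ (or, keeping the slack you discarded, $2^{\tau\cdot d+\tau}\cdot n^{2\tau\cdot d-d}$), and absorbing this into $\O(2^{\tau\cdot d+1}\cdot n^{2\tau\cdot d})$ would require $2^{\tau-1}\in\O(n^{d})$. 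The justification you offer ($\tau\cdot d<n$, hence $\tau\leq n$) only yields $2^{\tau}\leq 2^{n}$; taking $d=1$ and $\tau=n-1$ (which satisfies $\tau\cdot d<n$) you would need $2^{n-2}\in\O(n)$, which fails. So as written your argument establishes a weaker bound than the lemma states.

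The fix is to bound $|\calF(X')|$ without the leading factor $2$ before raising to the power $\tau-1$. Using the counting fact from the preliminaries in the form $\sum_{i=0}^{d}\binom{|X'|}{i}\leq 1+|X'|^{d}$, you get $|\calF(X')|\leq 1+(\tau\cdot d)^{d}\leq(1+\tau\cdot d)^{d}$, so the loop in \autoref{alg:xp:forloopg} has at most $(1+\tau\cdot d)^{d(\tau-1)}\leq(1+\tau\cdot d)^{\tau\cdot d}\leq n^{\tau\cdot d}$ iterations, the last inequality using $\tau\cdot d<n$, i.e., $1+\tau\cdot d\leq n$. This is precisely how the paper's proof proceeds: it writes $T_2(\I)\in\O(n+m)+(1+(\tau\cdot d)^{d})^{\tau-1}\cdot T_3(\I)$ and then estimates $(1+(\tau\cdot d)^{d})^{\tau-1}\leq(1+\tau\cdot d)^{\tau\cdot d}\leq n^{\tau\cdot d}$, so no $2^{\tau}$ ever appears and the total is $\O(n^{2\tau\cdot d}\cdot 2^{\tau\cdot d+1}\cdot\tau(n+m))$ as claimed. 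Note that the comment ``$2^{\tau}|X'|^{\tau\cdot d}$ many'' annotating \autoref{alg:xp:forloopg} is itself looser than what the proof needs, so matching that annotation is not sufficient here.
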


\begin{proof}
 The running time~$T(\I)$ is~$T(\I)\leq (1+n^{\tau\cdot d})\cdot T_1(\I)$,
 where~$T_1(\I)$ is the worst-case running time inside the first for-loop
 (\autoref{alg:xp:forloopfone} to~\autoref{alg:xp:yes}).
 Analogously,
 we have~$T_1(\I)\leq 2^{\tau\cdot d}\cdot T_2(\I)$,
 and~$T_2(\I)\in \O(n+m) + (1+(\tau\cdot d)^d)^{\tau-1}\cdot T_3(\I)$.
 Now,
 $T_3(\I)\in \O(\tau(n+m))$,
 as \autoref{alg:xp:f} can be done in~$\O(n+m)$~time with~$(\tau-1)$ executions of this line,
 and checking the if-condition for~\autoref{alg:xp:yes} can be done in~$\O(\tau(n+m))$ time.
 We arrive at
 \begin{align*}
  T(\I) &\in \O((1+n^{\tau\cdot d})\cdot 2^{\tau\cdot d}\cdot ((n+m)+(1+\tau\cdot d)^{\tau\cdot d}\cdot \tau(n+m))) \\
  & \subseteq \O(n^{2\tau\cdot d}\cdot 2^{\tau\cdot d+1}\cdot \tau(n+m))
  \qedhere
 \end{align*}
\end{proof}

\noindent
We are set to prove the main result from this section.

\begin{proof}[Proof of~\cref{thm:xptaud}]
 Let~$\I=(X,\seqf=(\frml_1,\dots,\frml_\tau),d)$ be an instance of~\msatAcr{}
 with $n$~variables and at most~$m$ clauses in each stage's formula.
 If~$\tau\cdot d\geq n$,
 then,
 by~\cref{thm:fptn},
 we know that \msatAcr{} is solvable in~$\O(2^{2\tau\cdot d}\cdot \tau(n+m))$ time.
 Otherwise,
 if~$\tau\cdot d<n$,
 then
 \cref{alg:xp} runs in~$\O(n^{2\tau\cdot d}\cdot 2^{\tau\cdot d+1}\tau(n+m))$ time 
 (\cref{lem:algrt})
 and correctly decides~$\I$ (\cref{lem:algyesIyes}).
\end{proof}

\begin{remark}
 \cref{thm:xptaud} is asymptotically optimal regarding~$d$ unless \ETHbreaks{}~(\cref{thm:nphardnesstau}).
Moreover,
 \cref{thm:xptaud} is not adaptable to~\prob{Multistage $q$-SAT} with~$q\geq 3$ unless~$\classP=\NP$
 since \prob{Multistage $q$-SAT} with~$q\geq 3$ is \NP-hard even with~$\tau+d\in \O(1)$.
\end{remark}

\section{Efficient and Effective Data Reduction}

In this section,
we study efficient and provably effective data reduction for \msatTsc{} in terms of problem kernelization.
We focus on the parameter combinations~$n+m$, 
$n+\tau$, 
and~$m+\tau$.
We prove that no problem kernelization of size polynomial in~$n+m$ exists unless~\NPincoNPslashpoly{} (\cref{ssec:nopkmn}),
and that a problem kernelization of size quadratic in~$m+\tau$ and
of size cubic in~$n+\tau$ exists (\cref{ssec:ppk}).
Finally,
we prove that no problem kernel of size truly subquadratic in~$m+\tau$ exists unless~\NPincoNPslashpoly{}\ifarxiv{}~(\cref{sssec:nosubqk})\else{}\fi{}.

\subsection{No Time-Independent Polynomial Problem Kernelization}
\label{ssec:nopkmn}

When parameterized by~$n+m$,
efficient and effective data reduction appears unlikely.

\begin{theorem}%
 \label{thm:nopkmn}
 Unless~$\NPincoNPslashpoly$,
 \msatTsc{} admits no problem kernel of size polynomial in~$n^{f(m,d)}$,
 for any function~$f$ only depending on~$m$ and~$d$.
\end{theorem}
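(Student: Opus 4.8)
The plan is to rule out polynomial kernels by an AND-cross-composition from \textsc{3-SAT} into \msatAcr{} parameterized by~$n+m$; combined with the fact that the unparameterized problem is in~\NP{}, such a composition implies the absence of a polynomial kernel unless~\NPincoNPslashpoly{}~\cite{cygan2015parameterized}. First I would fix the polynomial equivalence relation that groups \textsc{3-SAT} instances by their number~$N$ of variables and number~$M$ of clauses, so that a composition receives~$t$ instances~$\frml^1,\dots,\frml^t$ all over the common variable set~$X=\{x_1,\dots,x_N\}$ and each with exactly~$M$ clauses. The feature to exploit is that~$\tau$ does not enter the parameter: I can afford polynomially many stages while keeping~$n$ and~$m$ tiny (in fact constant for~$m$).

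The construction reuses the gadget of~\cref{constr:threesat} for each~$\frml^j$, producing a block~$B_j$ of~$2M$ stages over the shared variables~$X\cup\{b_1,b_2,b_3\}$ in which, by~\cref{obs:onlybchanges}, the budget~$d=1$ is entirely consumed by the rotation of the~$b$-variables, so the~$X$-part is frozen throughout~$B_j$ and must satisfy~$\frml^j$. I concatenate~$B_1,\dots,B_t$ into one sequence~$\seqf$, but separate consecutive blocks by a stretch of~$N+\O(1)$ \emph{free} stages whose formulas are trivially satisfiable (say empty). Since no clause forces a~$b$-flip during a free stage, the whole budget is available to the variables there; over~$N+\O(1)$ free stages, changing one variable per step, the~$X$-assignment can morph from any assignment to any other and the~$b$-variables can be reset to the entry configuration (a~$\frml_{\xneg{B}}$ stage with two~$b$'s false) required by the next block. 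Throughout, $d=1$ and~$m=6$.

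Correctness is exactly the AND of the individual satisfiabilities. For the backward direction, applying~\cref{obs:onlybchanges} inside each block shows that any solution freezes~$X$ within~$B_j$ to an assignment satisfying~$\frml^j$, so every~$\frml^j$ is satisfiable. For the forward direction, given satisfying assignments~$\alpha_j$ for each~$\frml^j$, I realize them in the blocks and use the free stages to interpolate between~$\alpha_j$ and~$\alpha_{j+1}$ one variable at a time (and to reset~$b$), yielding a valid solution. I would stress that this is a genuine AND and not satisfiability of the conjunction~$\bigwedge_j\frml^j$: the free transitions are precisely what lets~$X$ differ between blocks, decoupling them. The composed instance has~$n=N+3$ and~$m=6$, hence~$n+m\le\poly(\max_j|\frml^j|)$ independently of~$t$, and the construction runs in time polynomial in~$\sum_j|\frml^j|$ with~$\tau=\O(t(N+M))$; this is a valid AND-cross-composition.

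To conclude, a valid AND-cross-composition from the \NP-hard \textsc{3-SAT} yields that \msatAcr{} parameterized by~$n+m$ admits no polynomial kernel unless~\NPincoNPslashpoly{}. Since~$m=6$ and~$d=1$ are constants in all composed instances, a hypothetical kernelization of size polynomial in~$n^{f(m,d)}$ would, on these instances, produce size~$n^{f(6,1)}\in\poly(n)$, i.e.\ a polynomial kernel in~$n+m$, a contradiction. I expect the main obstacle to be the design and verification of the free-transition gadget: it must simultaneously (a)~let the entire~$X$-assignment change between blocks under the stringent~$d=1$ budget, (b)~reset the~$b$-variables to the block-entry configuration, and (c)~stay trivially satisfiable so it imposes no hidden constraint. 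Getting this right is what turns a naive concatenation—which would only encode satisfiability of the conjunction—into an honest AND-composition.
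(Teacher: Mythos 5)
Your proposal is correct and follows essentially the same route as the paper: an AND-composition that chains the hard $d=1$, $m=6$ blocks and inserts a stretch of trivially satisfiable ``free'' stages between consecutive blocks so that the $X$-assignment can be morphed one variable per stage (the paper uses the always-true clause $(x_1\lor\xneg{x_1})$ for these stages and composes $t$ instances of the already-established \NP-hard restriction of \msatAcr{} rather than inlining \cref{constr:threesat}, but that is a cosmetic difference). The final step---observing that $m$ and $d$ are mapped to constants, so a kernel of size polynomial in $n^{f(m,d)}$ would be a polynomial kernel---also matches the paper.
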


\noindent
We will prove \cref{thm:nopkmn} via an AND-composition~\cite{BodlaenderDFH09,BodlaenderJK14},
that is,
we prove that given $t$ instances of \msatTsc{}, 
each with~$d=1$ and the same number of variables and stages,
we can compute in polynomial time an instance of \msatTsc{}
such that all input instances are \yes{} if and only if the output instance is \yes,
and the number of variables and the maximum number of clauses in one stage does not exceed those from all input instances.
Drucker~\cite{Drucker15} proved that if a parameterized problem admits an AND-composition  from an~\NP-hard problem,
then it admits no polynomial problem kernelization,
unless~\NPincoNPslashpoly.

\begin{construction}
 \label{constr:andcroco}
 Let~$\I_1,\dots,\I_t$ be $t$ instances of \msatAcr{} with~$d=1$,
 $m=6$,
 $n$~variables,
 and $\tau$~stages,
 where~$\I_i=(X^i,\seqf^i,d)$ with~$X^i=\{x_1^i,\dots,x^i_n\}$
 and~$\seqf^i=(\frml^i_1,\dots,\frml^i_\tau)$.
 Construct the instance~$\I\ceq (X,\seqf,d)$ as follows.
 Construct the set~$X=\{x_1,\dots,x_n\}$ of variables,
 and identify~$x_j$ with~$x_j^i$ for each~$i\in\set{t}$, $j\in\set{n}$.
 In a nutshell,
 we construct the sequence of formulas by chaining up the input instances' formulas,
 and add~$n$ stages between any two consecutive instances each consisting of the always-true formula~$\cls{x_1}{\xneg{x_1}}$---these ensure a reconfiguration of the last \assment{} to the initial \assment{} of the subsequent instance.
 Formally,
 construct~$\seqf=(\frml_1,\dots,\frml_{t\cdot(\tau+n)})$ as follows.
 For all~$i\in\set{t}, j\in\set{\tau+n}$, 
 set (where $S(i)\ceq (i-1)\cdot(\tau+n)$)
 \[ \frml_{S(i)+j} \ceq \begin{cases} \frml^i_j,& \cif 1\leq j\leq \tau, \\ (x_1\lor \xneg{x_1}),& \cif \tau+1\leq j\leq \tau+n. \end{cases} \]
 Finally,
 set~$d=1$.
 \cqed
\end{construction}

\begin{lemma}%
 \label{lem:andcrocoopt}
 Let~$\I_1,\dots,\I_t$ be $t$ instances of \msatTsc{} with~$d=1$,
 $m=6$, 
 $n$~variables,
 and $\tau$ stages,
 and let~$\I$ be the instance obtained from~\cref{constr:andcroco}.
 Then,
 each~$\I_i$ is a \yes-instance if and only if~$\I$ is a \yes-instance.
\end{lemma}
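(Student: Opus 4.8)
The plan is to prove both directions of the equivalence by analyzing how a solution to $\I$ must behave on the padding stages. The key structural insight is that \cref{constr:andcroco} chains the $t$ input sequences together, inserting $n$ "always-true" buffer stages $(x_1\lor\xneg{x_1})$ between consecutive instances, while keeping $d=1$ throughout. Since $d=1$, between any two consecutive stages at most one variable changes value, so over the $n$ buffer stages the assignment can flip at most $n$ variables in total. This is exactly enough to reconfigure the final assignment of one instance (on $n$ variables) into \emph{any} initial assignment of the next instance—this is the role the padding plays.

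For the forward direction $(\Rightarrow)$, I would suppose each $\I_i$ is a \yes-instance with solution $(f_1^i,\dots,f_\tau^i)$. I would then concatenate these solutions, using the buffer stages to interpolate: since the always-true formula $(x_1\lor\xneg{x_1})$ imposes no constraint, I can use the $n$ intermediate stages to change the variables one at a time, flipping at most one variable per step, transforming the last assignment $f_\tau^i$ of instance $\I_i$ into the first assignment $f_1^{i+1}$ of instance $\I_{i+1}$. Because any two assignments on $n$ variables differ in at most $n$ coordinates, $n$ buffer stages with $d=1$ suffice to realize this reconfiguration. I would verify that every stage's formula is satisfied (the original stages by the inherited $f_j^i$, the buffer stages trivially) and that constraint~(ii) holds with $d=1$ at each transition.

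For the reverse direction $(\Leftarrow)$, I would take a solution $(f_1,\dots,f_{t(\tau+n)})$ to $\I$ and restrict it to the block of stages corresponding to each $\I_i$, namely stages $S(i)+1,\dots,S(i)+\tau$. The restriction $(f_{S(i)+1},\dots,f_{S(i)+\tau})$ satisfies each $\frml^i_j=\frml_{S(i)+j}$ for $j\in\set{\tau}$ by construction, and the Hamming-distance bound $d=1$ between consecutive stages is inherited directly. Hence this restriction is a valid solution to $\I_i$, so each $\I_i$ is a \yes-instance. This direction is essentially immediate—the buffer stages simply get discarded.

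The main obstacle, and the only place requiring genuine care, is the forward direction: I must argue that $n$ buffer stages truly suffice to bridge \emph{arbitrary} pairs of assignments under the $d=1$ constraint, and that the interpolating assignments indeed satisfy the (constraint-free) buffer formulas. The counting is tight—$n$ variables, $n$ steps, one flip each—so I would be explicit that reordering which variable flips at which buffer step is always possible, and that no buffer stage imposes any restriction, so intermediate assignments are automatically feasible. Everything else reduces to routine verification of conditions~(i) and~(ii) of \msatAcr{}.
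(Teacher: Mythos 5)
Your proposal is correct and follows essentially the same route as the paper: the reverse direction restricts the solution to the blocks of original stages, and the forward direction concatenates the individual solutions and uses the $n$ unconstrained buffer stages to flip one variable per step (the paper fixes variable $x_j$ to be set to its target value at buffer step $j$, so that after $n$ steps the assignment equals the initial assignment of the next instance). No gaps.
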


\begin{proof}
 \LD{}
 Let~$(f_1,\dots,f_{t(\tau+n)})$ be a solution to~$\I$.
 It is not difficult to see that,
 for each~$i\in\set{t}$,
 the sequence~$(f_{S(i)+1},\dots,f_{S(i)+\tau})$ is a solution to~$\I_i$.
 
 \RD{}
 For each~$i\in\set{t}$,
 let~$(f^i_1,\dots,f^i_\tau)$ denote a solution for~$\I_i$.
 We construct a solution~$f=(f_1,\dots,f_{t(\tau+n)})$ for~$\I$ as follows.
 For each~$i\in\set{t}$ and~$j\in\set{\tau}$,
 set~$f_{S(i)+j} \ceq f^i_j$.
 For each~$i\in\set{t-1}$,
 we define $f_{S(i)+\tau+1},\dots,f_{S(i)+\tau+n}$ iteratively as follows.
 For~$j=1,\dots,n$,
 let~
 \[ f_{S(i)+\tau+j}(x) \ceq \begin{cases} f_{S(i)+\tau+(j-1)}(x), & \cif x\in X\setminus\{x_j\}, \\ f_{S(i+1)+1}(x),& \cif x=x_j. \end{cases}\]
 Observe that for each~$j\in\set{n}$,
 it holds true that~$|\{x\in X\mid f_{S(i)+\tau+(j-1)}(x)\neq f_{S(i)+\tau+j}(x)\}|\leq 1$,
 and that~$f_{S(i)+\tau+n}=f_{S(i+1)+1}$.
\end{proof}

\begin{proof}[Proof of~\cref{thm:nopkmn}]
  \cref{constr:andcroco} forms an AND-composition 
  (\cref{lem:andcrocoopt})
  from an \NP-hard special case of~\msatAcr{} 
  (\cref{thm:nphardnessd})
  to \msatAcr{} when parameterized by~$n+m$,
  in fact,
  mapping~$m$ and~$d$ to a constant.
  Thus,
  due to~Drucker~\cite{Drucker15},
  \msatAcr{}
  admits no problem kernelization of size polynomial in~$n^{f(m,d)}$
  for any function~$f$ only depending on~$m$ and~$d$.
\end{proof}

\begin{remark}
 Due to~\cref{thm:fptn},
 \msatTsc{} yet admits a problem kernel of size~$2^{\O(n)}$.
\end{remark}

\subsection{Polynomial Problem Kernelizations}
\label{ssec:ppk}

We prove problem kernelizations of size polynomial in~$n+\tau$ and~$m+\tau$.

\begin{theorem}
 \label{thm:qaudkermntau}
 \msatTsc{} admits a linear-time computable problem kernelization of size~$\O(n^2\tau)$ and of size~$\O(m\cdot\tau)$.
\end{theorem}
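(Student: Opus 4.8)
The goal is to produce two kernelizations for \msatTsc{}: one of size~$\O(n^2\tau)$ and one of size~$\O(m\cdot\tau)$. The plan is to observe that the parameter~$\tau$ already bounds the number of stages, so each kernelization only needs to bound the \emph{size of each individual stage's formula} by a polynomial in the other parameter ($n$ or $m$), after which summing over the~$\tau$ stages yields the claimed total size. Crucially, both kernelizations should be realizable by stage-wise data reduction that preserves the set~$X$ of variables and the value of~$d$, since any reduction altering these across stages would risk changing the Hamming-distance constraint~(ii).

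For the~$\O(n^2\tau)$ kernel, I would argue as follows. Within any single stage~$\frml_i$, treat it as a standalone~$2$-CNF over~$n$ variables. There are at most~$\binom{n}{1}\cdot 2 + \binom{n}{2}\cdot 4 \in \O(n^2)$ distinct possible clauses of size at most two over~$n$ variables (up to reordering of literals), so one may simply delete duplicate clauses from each stage, leaving each formula~$\frml_i$ with~$\O(n^2)$ clauses. Since~\cref{rrule:trivialno} lets me assume each stage is individually satisfiable, no stage is discarded. The total encoding size is then~$\sum_{i=1}^\tau \O(n^2) = \O(n^2\tau)$. Deduplication per stage is linear-time (e.g., via a lexicographic bucketing of the~$\O(n^2)$ clause-types), so the whole procedure runs in linear time, and equivalence is immediate because removing duplicate clauses does not change the satisfying assignments of any~$\frml_i$, hence preserves both constraints~(i) and~(ii).

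For the~$\O(m\cdot\tau)$ kernel, the subtlety is that~$n$ may be much larger than~$m$, so I must shrink the variable set. The key observation is that a variable~$x$ appearing in \emph{no} clause of \emph{any} stage is completely free: it can be assigned identically (say~$\false$) across all stages, contributing~$0$ to every Hamming distance and satisfying no constraint nontrivially. Let~$X' \subseteq X$ be the set of variables occurring in at least one clause; since each stage has at most~$m$ clauses of at most two literals, $|X'| \leq \sum_{i=1}^\tau 2m = 2m\tau$. I would delete all variables in~$X\setminus X'$ (formally, fix them to~$\false$ in every stage). For the correctness of~(ii), note that a solution on the reduced instance extends to the original by assigning the removed variables constantly, and conversely any solution to the original, when restricted to~$X'$, can only have its consecutive Hamming distances decrease, so~$d$ is preserved. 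The encoding then has~$\O(m\tau)$ clauses over~$\O(m\tau)$ variables, i.e.\ size~$\O(m\tau)$; combining with per-stage clause deduplication keeps this bound, and linear-time computability is straightforward by a single scan recording which variables occur.

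The main obstacle, and the point demanding the most care, is the interaction of the variable-reduction step with constraint~(ii) in the~$\O(m\tau)$ case: one must verify that discarding unused variables neither creates spurious \yes{}-instances nor destroys genuine solutions, which hinges precisely on the fact that free variables can be pinned to a constant value contributing nothing to any inter-stage distance, so that the~$\le d$ budget is neither tightened nor loosened. I would also double-check the edge cases where some stage's formula is empty or where a variable occurs in some stages but not others (it still lies in~$X'$ and is simply retained). Everything else reduces to the counting bounds~$\O(n^2)$ clauses per stage and~$|X'|\le 2m\tau$, both of which follow directly from the at-most-two-literals-per-clause structure of~$2$-CNF.
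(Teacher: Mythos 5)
Your proposal is correct and follows essentially the same route as the paper: the paper's two reduction rules are exactly your per-stage clause deduplication (yielding $m\in\O(n^2)$ per stage and hence size $\O(n^2\tau)$) and the deletion of variables occurring in no stage's formula (yielding $n\leq 2m\tau$ and hence size $\O(m\tau)$). Your additional care about why pinning unused variables to a constant preserves constraint~(ii) is a correct elaboration of what the paper treats as immediate.
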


\noindent
We employ the following two immediate reduction rules
(each is clearly correct and applicable in linear time):

\begin{rrule}
 \label{rr:clauseclones}
 In each stage,
 delete all but one appearances of a clause in the formula.
\end{rrule}

\begin{rrule}
 \label{rr:ghostvars}
 Delete a variable that appears in no stage's formula as a literal.
\end{rrule}

\begin{proof}[Proof of~\cref{thm:qaudkermntau}]
 Observe that there are at most~$N \ceq 2n+\binom{2n}{2}\in \O(n^2)$ many pairwise different clauses.
 After exhaustively applying~\cref{rr:clauseclones},
 we have~$m\leq N\in \O(n^2)$.
 After exhaustively applying~\cref{rr:ghostvars},
 it follows that for each variable, 
 there is at least one clause,
 and hence,
 $n\leq  2\cdot m\cdot\tau$.
\end{proof}

\begin{remark}
 \cref{thm:qaudkermntau} adapts easily to \prob{Multistage $q$-SAT}.
 Herein,
 the problem kernel sizes are~$\O(n^q\cdot \tau)$ and~$\O(q\cdot m\cdot\tau)$.
\end{remark}

\noindent
Subsequently,
we prove that a linear kernel
appears unlikely.

\ifarxiv{}
  \subsubsection{No Subquadratic Problem Kernelization}
  \label{sssec:nosubqk}
\fi{}

\begin{theorem}
 \label{thm:nolinkermntau}
 Unless \NPincoNPslashpoly{},
 \msatTsc{} admits no problem kernel of size~$\O((m+n+\tau)^{2-\eps})$ for any~$\eps>0$.
\end{theorem}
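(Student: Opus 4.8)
The plan is to rule out a subquadratic kernelization by exhibiting an OR-cross-composition of \emph{cost} $\O(\sqrt{t})$ from an \NP-hard problem into \msatAcr{} parameterized by $m+n+\tau$, and then to invoke the degree-two kernel-lower-bound machinery \cite{BodlaenderJK14,DellM14} (building on \cite{BodlaenderDFH09}): a cross-composition in which $t$ instances of size at most $s$ are mapped to a single instance whose parameter is bounded by $t^{1/2}\cdot s^{\O(1)}$ forbids kernels of size $\O(k^{2-\eps})$ unless \NPincoNPslashpoly{}. A useful sanity check guiding the design is that the composition \emph{must} be genuinely two-dimensional: chaining $t$ instances along the time axis (as in the AND-composition of \cref{constr:andcroco}) would make $\tau$, and hence $m+n+\tau$, grow linearly in $t$, which is a cost-$t$ composition and would contradict the quadratic kernel of \cref{thm:qaudkermntau}. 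Thus the $\sqrt t$ must be split across two of the three parameters.

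Concretely, I would fix a polynomial equivalence relation bucketing \textsc{3-SAT} instances by their number of variables~$\nu$ and clauses~$\mu$, so that the given $t$ instances may be assumed uniform; write $t=r^2$ and index them as $\I_{a,b}$ with $a,b\in\set{r}$. The target instance arranges the grid so that the \emph{row} index $a$ is selected along the time axis and the \emph{column} index $b$ is selected inside a single stage. For the rows I would use $r$ consecutive \emph{blocks} of $\O(\mu)$ stages each, giving $\tau\in\O(r\mu)=\O(\sqrt t\cdot s^{\O(1)})$; for the columns I would present, within each clause-step of the active block, the literals of all $r$ candidate instances controlled by column-selector variables, giving $m\in\O(r)=\O(\sqrt t)$; and the variable set would consist of $\nu$ shared assignment variables for the selected instance together with $\O(r)$ row- and column-selectors and $\O(1)$ reusable gadget variables, giving $n\in\O(\nu+r)$. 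Each block would verify ``its'' selected $\I_{a,b}$ exactly by reusing the $B$-gadget of \cref{constr:threesat}: the clauses $\cls{\ell}{b_j}$ together with the $\frml_B/\frml_{\xneg B}$ stages and $d=1$ force the shared assignment to be constant and to satisfy every enforced clause, while blocks and columns that are not selected have their gadget variables free to be set true, so their clauses are auto-satisfied. Reset stages between blocks (again in the spirit of \cref{constr:andcroco}) would reconfigure the selectors within the $d$-budget. Correctness then reads as: the composed instance is a \yes-instance iff some $\I_{a,b}$ is satisfiable, since a solution picks exactly one active cell $(a^*,b^*)$ and certifies it, and conversely a satisfying assignment of $\I_{a^*,b^*}$ drives the corresponding selectors and block.

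The main obstacle is precisely the interaction of OR-selection with the two-literals-per-clause restriction: a naive ``guard'' $(\xneg{s}\lor\ell_1\lor\ell_2)$ that would deactivate a clause when a selector $s$ is off is a three-literal clause and hence illegal in \msatAcr{}. All conditional logic must therefore be routed through the multistage dynamics and the $B$-gadget rather than through extra literals, and the delicate part of the proof is verifying, under a global budget $d=1$, that (i)~deactivated rows/columns genuinely auto-satisfy their clauses, (ii)~\emph{exactly one} cell is forced into verification mode, and (iii)~the inter-block reconfiguration never lets the shared assignment drift---while simultaneously keeping $m$, $n$, and $\tau$ all within $\O(\sqrt t\cdot s^{\O(1)})$. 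Once the gadget and its two correctness directions are in place, the cost-$\sqrt t$ bound on $m+n+\tau$ feeds directly into \cite{BodlaenderJK14,DellM14} to yield the theorem; the same construction, having $m+\tau\in\O(\sqrt t\cdot s^{\O(1)})$ as well, additionally gives the sharper exclusion of $\O((m+\tau)^{2-\eps})$ kernels.
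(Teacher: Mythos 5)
Your overall framework is the right one---a degree-two weak (cross-)composition does rule out $\O(k^{2-\eps})$ kernels---but the proposal does not contain a proof: the single step on which everything hinges, namely OR-selecting one column inside a stage using only two-literal clauses and budget~$d=1$, is exactly the step you defer as ``the delicate part,'' and as described it runs into a concrete obstruction. To verify a 3-CNF clause of the selected instance you invoke the $B$-gadget of \cref{constr:threesat}, but that gadget works by forcing the $b_j$'s to alternate between the $\frml_B$- and $\frml_{\xneg{B}}$-stages with exactly one flip per step. If each column~$b$ gets its own gadget variables, the alternation must happen in all $r$ columns simultaneously, which costs $\Omega(r)$ changes per step and destroys $d=1$ (and deactivated columns cannot simply ``auto-satisfy'': their $\frml_{\xneg{B}}$-stages still constrain them). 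If instead the gadget is shared across columns, deactivating a non-selected column's clause $\cls{\ell}{b_j}$ requires a guard literal $\xneg{s_b}$, i.e.\ a three-literal clause, which you yourself note is illegal. So the construction as sketched does not go through, and no alternative gadget is supplied; the two correctness directions (i)--(iii) you list are precisely the unproven content.

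Independently of whether such a composition can be made to work, you are building machinery the theorem does not need. The statement follows by \emph{transferring} an existing quadratic lower bound rather than re-deriving one: there is a \lpt{} from \prob{Vertex Cover} parameterized by~$|V|$ to \msatTsc{} with $n+m+\tau\in\O(|V|)$ (encode $G$ on variables $x_1,\dots,x_n$ plus $k$ auxiliary variables $b_1,\dots,b_k$, use a first stage forcing everything false, and then one stage per vertex $v_i$ containing the clauses $\cls{x_i}{x_j}$ for its incident edges together with unit clauses that flip all of $B$ each step, so that with $d=k$ the assignment on $X_V$ is frozen after stage one and encodes a vertex cover of size at most~$k$). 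Since \prob{Vertex Cover} admits no (bi)kernel of size $\O(|V|^{2-\eps})$ unless \NPincoNPslashpoly{}~\cite{DellM14}, a kernel of size $\O((n+m+\tau)^{2-\eps})$ for \msatTsc{} would contradict that. This is the route the paper takes; it is a few lines once the reduction is written down, whereas your plan would require designing and verifying a new two-dimensional gadget whose feasibility under the $2$-CNF and $d=1$ constraints is, at present, unestablished. Your side remark that the same construction would also exclude $\O((m+\tau)^{2-\eps})$ kernels inherits the same gap.
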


\noindent
To prove~\cref{thm:nolinkermntau},
we show that there is a \lpt{} from \prob{Vertex Cover} parameterized by~$|V|$
to~\msatTsc{} parameterized by~$n+m+\tau$.

\begin{construction}
 \label{constr:nolinkermntau}
 Let~$\I=(G,k)$ with~$G=(V,E)$ be an instance of \prob{Vertex Cover}.
 Denote the vertices~$V=\{v_1,\dots,v_n\}$.
 We construct the instance~$\I'=(X,\seqf,d)$ of \msatAcr{} with~$d=k$ and~$\seqf=(\frml_0,\frml_1,\dots,\frml_n)$ as follows.
 Let~$X=X_V\cup B$ with~$X_V=\{x_i\mid v_i\in V\}$ and~$B=\{b_1,\dots,b_k\}$.
 Let
 \begin{align*}
\frml_0 &\ceq \bigland_{i=1}^n (\xneg{x_i}) \land \bigland_{j=1}^k (\xneg{b_j}) \text{~~~and} && \\
 \frml_i &\ceq \bigland_{\{v_{i},v_j\}\in E} \cls{x_{i}}{x_j} \land 
 \begin{cases} 
    \bigland_{j=1}^k (b_j) & \cif{}i\bmod 2=0,  \\ 
    \bigland_{j=1}^k (\xneg{b_j}) & \cif{}i\bmod 2=1,  
 \end{cases} 
 && \forall i\in\set{n}.
 \end{align*}
 Note that~$\tau+m+|X|\in \O(n)$,
 since each vertex degree is at most~$n-1$.
 \cqed
\end{construction}

\begin{lemma}%
 \label{lem:nolinkermntau}
 Let~$\I=(G,k)$ be an instance of \prob{Vertex Cover},
 and let~$\I'=(X',\seqf',d)$ be the instance of \msatTsc{} obtained from~$\I$ using~\cref{constr:nolinkermntau}.
 Then,
 $\I$ is a \yes-instance if and only if $\I'$ is a \yes-instance.
\end{lemma}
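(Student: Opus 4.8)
The plan is to prove the biconditional in \cref{lem:nolinkermntau} by exhibiting how a vertex cover of size at most~$k$ translates into a solution of \msatAcr{}, and vice versa. The construction places the ``graph'' constraints in the odd/even alternating formulas~$\frml_i$, using the auxiliary variables~$B=\{b_1,\dots,b_k\}$ to force a total of~$\tau=n$ stage-to-stage reconfigurations. The key observation I would establish first is that, because~$\frml_0$ sets all variables to~$\false$ and each~$\frml_i$ with~$i$ even (resp.\ odd) requires all~$b_j$ to be~$\true$ (resp.\ $\false$), the~$k$ variables in~$B$ must flip between consecutive stages. Since~$|B|=k=d$, this consumes the entire reconfiguration budget~$d$ at every odd-to-even (and even-to-odd) transition where the parity of the required~$B$-assignment changes. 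I expect the main obstacle to be the bookkeeping of exactly how much budget is left for the~$X_V$-variables at each transition, since this is what links the graph's edges to a genuine vertex cover.

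For the forward direction \RD{}, I would start from a vertex cover~$S\subseteq V$ with~$|S|\leq k$. The idea is to set the variables~$\{x_i\mid v_i\in S\}$ to~$\true$ and keep them~$\true$ throughout all stages~$\frml_1,\dots,\frml_n$, while the~$B$-variables are toggled to match the required parity in each stage. Then every clause~$\cls{x_i}{x_j}$ coming from an edge~$\{v_i,v_j\}\in E$ is satisfied because~$S$ covers the edge, so at least one endpoint variable is~$\true$. The transition from~$\frml_0$ (all~$\false$) to~$\frml_1$ must simultaneously set the~$X_V$-variables corresponding to~$S$ to~$\true$ and flip the appropriate~$B$-variables; here I would need to check that the budget~$d=k$ suffices---this is the delicate step, since both the~$|S|\leq k$ graph-variables and the~$B$-variables want to change at once. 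I anticipate that the construction is designed so that at the~$\frml_0\to\frml_1$ transition only the~$X_V$-variables of~$S$ change (the~$b_j$ stay~$\false$, matching odd parity), and thereafter only the~$B$-variables toggle, so that each transition uses at most~$k$ changes.

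For the backward direction \LD{}, I would take a solution~$(f_0,\dots,f_n)$ and set~$S\ceq\{v_i\mid f_1(x_i)=\true\}$, then argue~$S$ is a vertex cover of size at most~$k$. Covering follows because each~$\frml_i$ (for the relevant~$i$) contains the clause~$\cls{x_i}{x_j}$ for every edge, forcing at least one endpoint to be~$\true$; I would need to argue the~$X_V$-assignment is essentially frozen across stages so that~$S$ simultaneously covers \emph{all} edges, analogous to \cref{obs:onlybchanges} and the \LD{} argument of \cref{lem:threesat:corr}. The size bound~$|S|\leq k$ is forced because at the~$\frml_0\to\frml_1$ transition all~$b_j$ remain~$\false$ (both~$\frml_0$ and the odd~$\frml_1$ want them~$\false$), so the entire budget~$d=k$ is available only for the~$X_V$-variables, meaning at most~$k$ of them can become~$\true$; since they stay fixed afterwards, this bounds~$|S|$. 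The main obstacle throughout is verifying that the budget accounting forces the~$B$-variables to absorb exactly the parity-flip cost at even/odd transitions while leaving room for the vertex-selection to happen only once, thereby rigidly coupling the number of selected vertices to~$k$ and their identity to a valid cover.

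Finally, to complete the proof of \cref{thm:nolinkermntau} itself (after the lemma), I would invoke that \cref{constr:nolinkermntau} runs in polynomial time, produces an equivalent instance by \cref{lem:nolinkermntau}, and satisfies~$n+m+\tau\in\O(|V|)$ as noted in the construction; this makes it a \lpt{} from \prob{Vertex Cover} parameterized by~$|V|$. Since \prob{Vertex Cover} admits no problem kernel of size~$\O(|V|^{2-\eps})$ unless~\NPincoNPslashpoly{} (a known result analogous to the \prob{Weighted 2-SAT} bound cited via \cite{DellM14}), the linear parametric transformation transfers this lower bound, so \msatAcr{} parameterized by~$n+m+\tau$ admits no~$\O((n+m+\tau)^{2-\eps})$ kernel unless~\NPincoNPslashpoly{}.
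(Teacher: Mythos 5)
Your proposal is correct and follows essentially the same route as the paper's proof: the forward direction freezes the cover variables and toggles~$B$ to match the stage parity (noting that the~$\frml_0\to\frml_1$ transition leaves~$B$ untouched so the full budget~$d=k$ is available for setting the cover variables), and the backward direction uses exactly the paper's two observations that~$\frml_0$ and~$\frml_1$ force~$|\{x\in X_V\mid f_1(x)=\true\}|\leq k$ and that the forced parity flip of all~$k$ variables in~$B$ exhausts the budget at every later transition, freezing the~$X_V$-assignment so the edge clauses yield a vertex cover. The concluding paragraph on deriving \cref{thm:nolinkermntau} via a \lpt{} from \prob{Vertex Cover} and the kernel lower bound of~\cite{DellM14} also matches the paper.
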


\begin{proof}
 \RD{}
 Let~$V'\subseteq V$ be a size-at-most-$k$ vertex cover of~$G$.
 Let~$X_W\ceq \{x_i\in X_V\mid v_i\in W\}$.
 Define~$f_0\colon X\to \tfset$ such that~$f_0(x)=\false$ for all~$x\in X$.
 Define~$f_1,\dots,f_n\colon X\to \tfset$ and~$f^*\colon X_V\to\tfset$ as
 \begin{align*} 
 f_i(x) = \begin{cases} f^*(x), & \cif x\in X_V, \\ \true, & \cif x\in B\text{ and } i\bmod 2=0, \\ \false , & \cif x\in B\text{ and } i\bmod 2=1,\end{cases} 
 \ \ \text{ where }\ \
 f^*(x) = \begin{cases} \true, & \cif x\in X_W, \\ \false, & \cif x\in X_V\setminus X_W.\end{cases} 
 \end{align*}
 Observe that~$|\{x\in X\mid f_0(x)\neq f_1(x)\}|=|\{x\in X_V\mid f^*(x)=\true\}|=|X_W|=|W|\leq k$. 
 Moreover,
 for each~$i\in\set{n-1}$,
 we have that
 $|\{x\in X\mid f_i(x)\neq f_{i+1}(x)\}|=|B|=k$.
 It is not difficult to see that~$f_i$ satisfies~$\frml_i$ for each~$i\in\set[0]{\tau}$.
 Hence,
 $(f_0,f_1,\dots,f_n)$ is a solution to~$\I'$.
 
 \LD{}
 Let~$f=(f_0,f_1,\dots,f_n)$ be a solution to~$\I'$.
 By construction of~$\frml_0$,
 it must hold that~$f_0(x)=\false$ for all~$x\in X$.
 Moreover,
 by construction of~$\frml_1$,
 we know that~$f_1(x)=\false$ for all~$x\in B$, and hence
 $X'\ceq \{x\in X_V\mid f_0(x)\neq f_1(x)\}=\{x\in X_V\mid f_1(x)=\true\}$ has~$|X'|\leq k$.
 Since for each~$i\in\set{n-1}$,
 we have that~$\{x\in X\mid f_i(x)\neq f_{i+1}(x)\}=B$ by construction,
 we know that for each~$i,j\in\set{n}$ it holds true that~$f_i(x)=f_j(x)$ for all~$x\in X_V$.
 We claim that~$W=\{v_i\in V\mid x_i\in X'\}$ is a size-at-most-$k$ vertex cover of~$G$.
 We know that~$|W|=|X'|\leq k$.
 Suppose towards a contradiction that there is an edge~$\{v_i,v_j\}\in E$ disjoint from~$W$.
 This implies that~$f_i(x_i)=f_i(x_j)=\false$.
 By construction,
 $\frml_i$ contains the clause~$\cls{x_i}{x_j}$,
 which is not satisfied by~$f_i$.
 This contradicts the fact that~$f$ is a satisfying \assment{}.
 It follows that~$W$ is a size-at-most-$k$ vertex cover of~$G$,
 and thus,
 $\I$ is a \yes-instance.
\end{proof}

\begin{proof}[Proof of~\cref{thm:nolinkermntau}]
  \cref{constr:nolinkermntau} is a \lpt{} (\cref{lem:nolinkermntau})
  such that~$\tau+m+|X|\in \O(|V|)$.
  Since \prob{Vertex Cover} admits no problem bikernelization of size~$\O(|V|^{2-\eps})$, $\eps>0$~\cite{DellM14},
  the statement follows.
\end{proof}

\begin{remark}
 \cref{thm:nolinkermntau} can be easily adapted to~\prob{Multistage $q$-SAT}
 when taking~\prob{$q$\nobreakdash-Hitting Set} as source problem~\cite{DellM14},
 ruling out problem kernelizations of size~$\O((n+m+\tau)^{q-\eps})$, $\eps>0$ (unless~\NPincoNPslashpoly).
\end{remark}

\section{Conclusion}

While \prob{2-SAT} is linear-time solvable,
its multistage model~\msatTsc{} is intractable in even surprisingly restricted cases.
This is also reflected by the fact that several of our direct upper bounds are already asymptotically optimal.
By our results,
the most interesting difference between \msatTsc{} and~\mqsatTsc{}, 
with $q\geq 3$,
is that the former is efficiently solvable if the numbers of stages and allowed consecutive changes are constant,
which is not the case for the latter (unless~$\classP=\NP)$.
Finally,
our results show that exact solutions are far from practical,
waving the path for randomized or heuristic approaches.

{
\small
\bibliography{ms2sat-bib}
}

\end{document}